\theoremstyle{definition}
\newtheorem{theorem}[thm]{Theorem}
\newtheorem{lemma}[thm]{Lemma}
\newtheorem{definition}[thm]{Definition}
\newcommand{\sderives}[1]{\sststile{#1}{}}
\newcommand{\dbox}[1]{\langle #1 \rangle}
\newcommand{\bbox}[1]{[ #1 ]}
\DeclareMathOperator{\nnf}{nnf}
\DeclareMathOperator{\config}{config}
\DeclareMathOperator{\?}{?}
\DeclareMathOperator{\mymodels}{models}
\DeclareMathOperator{\final}{final}
\DeclareMathOperator{\mypath}{hpath}
\DeclareMathOperator{\origin}{origin}
\DeclareMathOperator{\myindex}{index}
\newtheorem{corollary}[theorem]{Corollary}
\begin{document}

\title{Automata Linear Dynamic Logic on Finite Traces}

\author{Kevin W. Smith\lmcsorcid{0009-0006-3674-3289}}
\address{Computer Science Department, Rice University}
\email{kwsmith@fastmail.com}
\thanks{Based on MS Thesis of Kevin W.~Smith. Work supported in part by NSF grants IIS-1527668, CCF-1704883, IIS-1830549, DoD MURI grant N00014-20-1-2787, and an award from the Maryland Procurement Office.}

\author{Moshe Y. Vardi\lmcsorcid{0000-0002-0661-5773}}
\address{Computer Science Department, Rice University}
\email{vardi@rice.edu}

\begin{abstract}

  Linear Dynamic Logic on Finite Traces (LDL$_f$) is a temporal logic that extends propositional logic with regular expressions. Being equivalent in expressiveness to Monadic Second Order Logic, LDL$_f$ is more expressive than Linear Temporal Logic on Finite Traces (LTL$_f$), which is equivalent to Monadic First Order Logic, yet satisfiability checking for both LDL$_f$ and LTL$_f$ formulas is PSPACE-complete.

  Here we introduce \textit{Automata Linear Dynamic Logic on Finite Traces} (ALDL$_f$), which extends the paradigm of LDL$_f$ by using nondeterministic finite automata (NFA) in place of regular expressions and by providing for the direct expression of past modalities. NFA are as expressive as regular expressions, but have been shown to be exponentially more succinct. ALDL$_f$ is as expressive as LDL$_f$ and, as we show here,  satisfiability checking for ALDL$_f$ formulas is also PSPACE-complete, so the improved succinctness and past modalities come at no cost in complexity.
\end{abstract}

\maketitle

\section{Introduction}
\label{ch:introduction}

Temporal logics are widely used as specification languages for formal program verification and synthesis. They are also commonly used by the AI community for reasoning about actions and planning, such as expressing temporal constraints in task planning. One such logic, Linear Temporal Logic (LTL) \cite{pnueli77}, has been widely used in many contexts. It extends standard propositional logic with additional operators to express temporal constraints. It is simple to understand and easy to use effectively. Moreover, it has an attractive balance between formal expressiveness, i.e., the types of properties that can be specified by a formula, and the complexity of performing basic operations on formulas, such as satisfiability checking, which is a standard measure for determining the computational tractability of a formal logic.

A more recently developed temporal logic is Linear Dynamic Logic (LDL) \cite{ldlf}. LDL offers increased formal expressiveness, which makes it useful in cases when LTL's lack of expressiveness is an obstacle, which is not uncommon in practice \cite{ldlf}. LDL maintains a comparable ease of use to LTL. It combines propositional logic with the use of \textit{path expressions}, which are regular expressions used to express temporal constraints. Since most computer scientists are already familiar with regular expressions, it is easy for them to learn to use path expressions effectively.

The semantics of temporal logics may be defined over infinite traces or finite traces. The former allows for reasoning over an unbounded length of time, while the latter allows for reasoning over a finite time horizon. LTL was originally defined to operate over infinite traces \cite{pnueli77}, but more recent variants of LTL have been defined over finite traces and have proven to be useful in practice \cite{aalst,pevsic,ldlf}. Given the usefulness of finite-trace temporal logics and the increasing degree of interest in them, we limit our scope here to considering finite-trace variants of temporal logics and leave infinite-trace variants to future work.

A limitation of LDL$_f$ is that there exist some LDL$_f$ formulas for which there is currently no published well defined NFA construction. The only published LDL$_f$ construction, which is provided in \cite{ldlf} and used to check satisfiability, is not well defined for these formulas. There is a class of path expressions that if present in an LDL$_f$ formula will cause the automaton construction to fail. In particular, if a path expression contains a Kleene star operator whose operand is an expression that can be of length 0, then following the construction in \cite{ldlf} results in infinite recursion while building the transition function.

The main contribution of this work is the introduction of \textit{Automata Linear Dynamic Logic on Finite Traces} (ALDL$_f$), a variant of LDL$_f$, and showing that satisfiability for ALDL$_f$ is in PSPACE. ALDL$_f$ uses \textit{path automata}, which are nondeterministic finite automata (NFA) used to express temporal constraints. Using path automata rather than path expressions allows for an automaton construction that is valid for all ALDL$_f$ formulas. Moreover, by converting path expressions to path automata using classic algorithms such as Thompson's construction for converting regular expressions to NFA \cite{thompson}, the automaton construction for ALDL$_f$ can also be used for all LDL$_f$ formulas. Additionally, ALDL$_f$ provides for the direct expression of past constraints, which can be combined with present and future constraints within a single ALDL$_f$ formula. In comparison, LDL$_f$ can only directly express present and future constraints, while pure-past LDL$_f$ (PLDL$_f$) can only directly express present and past constraints \cite{pldlf}.

Additionally, ALDL$_f$ provides for the direct expression of past constraints, in contrast to LDL$_f$, which can only make direct claims about the present and future. As part of the means of establishing ALDL$_f$ satisfiability, we also introduce a novel variant of the \textit{two-way alternating automaton on finite words} (2AFW) that, in addition to conventional finite-run acceptance conditions, uses a B\"uchi-like acceptance condition on infinite runs to allow some infinite runs to be accepting.

ALDL$_f$ is equivalent in expressiveness to LDL$_f$, which is equivalent to Monadic Second-Order Logic, while LTL is equivalent to Monadic First-Order Logic. Since LTL satisfiability is PSPACE-complete and this work demonstrates that ALDL$_f$ satisfiability is in PSPACE, ALDL$_f$ provides greater expressiveness and succinctness than LTL$_f$ without increasing the complexity of satisfiability.
The motivation to add automata connectives to LDL$_f$ is two fold.
First, modern industrial property specification languages such as \emph{SystemVerilog Assertions} (SVA) allowed assertions to contain local variables, allowing the expression of local state, which is not allowed in LDL$_f$. Second, on the theoretical side, as finite-state automata are exponentially more succinct that regular expressions \cite{gruber2}, our investigation here explores the succinctness/complexity terrain for finite-horizon temporal logics, analogously to \cite{vardi-wolper}.

\subsection{Outline}

Our main goals are to provide a construction to translate a formula of ALDL$_f$ to a nondeterministic finite automaton (NFA) and to show that satisfiability checking of ALDL$_f$ formulas is in PSPACE.

We start with some preliminary definitions in Section 2. In Section 3, we examine related work in order to give historical context to ALDL$_f$. In Section 4, we formally define ALDL$_f$. This consists of defining the syntax of ALDL$_f$ formulas and the structure of ALDL$_f$ path automata and defining the semantics of ALDL$_f$ formulas, as well as defining the Fischer-Ladner closure for ALDL$_f$, which is a generalization of the concept of subformula that is appropriate for modal logics such as ALDL$_f$. In Section 5, we define the two-way alternating automaton on finite words (2AFW), which serves as an intermediate step in the ALDL$_f$ formula to NFA construction. In Section 6, we provide a construction that takes an ALDL$_f$ formula and produces an equivalent 2AFW and we prove the correctness of this construction. Finally, in Section 7, we show how to construct an equivalent NFA from a 2AFW and prove that this construction is correct. In Section 8, we show that ALDL$_f$ satisfiability is in PSPACE and we conclude by discussing the significance of this work.

\section{Preliminaries}
\label{ch:preliminaries}

In this section we introduce basic definitions that are necessary to understand the main body of this work.

An \textit{atomic proposition} is a declarative sentence that may be either true or false and cannot be divided such that its subcomponents may be either true or false. Atomic propositions are the most basic building blocks of logic and are usually represented as a variable.

A formula $\varphi$ of \textit{propositional logic} is generated by the following grammar:
\[ \varphi::= P \mid (\neg\varphi') \mid (\varphi_1 \wedge \varphi_2)\text{, where $AP$ is the set of atomic propositions and }P \in AP\]

A \textit{propositional interpretation} $\Pi$ is a set of atomic propositions that represents the truth values of
the most basic facts of the world: if an atomic proposition $P$ is in $\Pi$
then this represents that $P$ is true, otherwise that $P$ is false.

A propositional formula $\varphi$ is true within a propositional interpretation $\Pi$, that is, $\Pi \models \varphi$, iff one of the following holds:
\begin{enumerate}
    \item $\varphi = P$, where $P \in AP$, and $P \in \Pi$
    \item $\varphi = (\neg\psi)$ and $\Pi \not\models \psi$
    \item $\varphi = (\psi_1 \wedge \psi_2)$ and $\Pi \models \psi_1$, $\Pi \models \psi_2$
\end{enumerate}

The language defined by a propositional formula $\varphi$ is the set of propositional interpretations that make $\varphi$ true,
that is, $\mathcal{L}(\varphi) = \{ \Pi \mid \Pi \in 2^{AP} \text{ and } \Pi \models \varphi \}$.

Temporal logics with a linear-time model, such as Linear Temporal Logic (LTL) and Linear Dynamic Logic (LDL), generalize
propositional logic to facilitate formal reasoning about a world that can change over time. Such logics model time as a
discrete series of instances. A \textit{trace} $\pi = \pi_0 \pi_1 \ldots \pi_n$, where each $\pi_i \in 2^{AP}$, is a sequence of propositional interpretations. Intuitively, each element
of a trace represents what is true of the world at a particular instant. One can think of each element as a snapshot of the
current state of the world, and the trace itself can be thought of as a flip book that represents how the world changes over time.
In general a trace may be finite or infinite, but in the present work we are concerned only with finite traces.

A formula $\varphi$ of temporal logic may be evaluated against a
position $i$ within a trace $\pi$, where $i$ is a valid index in $\pi$.
If $\varphi$ is propositional then its truth value depends only on the content of the propositional interpretation
at position $i$. If $\varphi$ expresses a temporal constraint, then the truth value of $\varphi$ depends on the
content of propositional interpretations at position(s) other than $i$: if one or more propositional interpretations at positions less than $i$
are relevant then this constraint is a \textit{past modality}, and if one or more propositional interpretations at positions greater than $i$ are
relevant then this constraint is a \textit{future modality}.
The language defined by a formula of temporal logic $\varphi$ is the set of traces that make the formula true. That is,
$\mathcal{L}(\varphi) = \{ \pi \mid \pi \in (2^{AP})^* \text{ and } \pi \models \varphi \}$.

LTL$_f$, which is a variant of LTL defined for finite traces, is introduced in \cite{ldlf}. Its formulas are generated from the same grammar as standard LTL, which extends propositional logic with two additional operators, $X$ (``next'') and $\mathcal{U}$ (``until''):
\[ \varphi::= P \mid (\neg\varphi) \mid (\varphi \wedge \varphi) \mid (X\varphi) \mid (\varphi\ \mathcal{U}\ \varphi) \]

The semantics of LTL$_f$ are defined as follows. Let $\pi$ be a trace of finite length and $|\pi|$ be the length of $\pi$. The satisfaction of an LTL$_f$ formula $\varphi$ at time point $i$ on $\pi$, written as $\pi,i \models \varphi$, is inductively defined as follows:
\begin{enumerate}
    \item $\pi,i \models P$, $P \in AP$, iff $P \in \pi_i$
    \item $\pi,i \models \neg\varphi$ iff $\pi,i \not\models \varphi$
    \item $\pi,i \models \varphi_1 \wedge \varphi_2$ iff $\pi,i \models \varphi_1$ and $\pi,i \models \varphi_2$
    \item $\pi,i \models X\varphi$ iff $i < |\pi|$ and $\pi,i+1 \models \varphi$
    \item $\pi,i \models \varphi_1 $ $ \mathcal{U} $ $  \varphi_2$ iff for some $j$, $i \leq j < |\pi|$, for all $k$, $i \leq k < j$, $\pi,k \models \varphi_1$ and $\pi,j \models \varphi_2$
    \item $\pi \models \varphi$ iff $\pi,0 \models \varphi$
\end{enumerate}

Automata Linear Dynamic Logic on Finite Traces (ALDL$_f$) is a temporal logic whose formulas may make use of a construct called a \textit{path automaton}, appropriately defined later, to express temporal constraints. It is capable of expressing both past and future modalities.

A \textit{nondeterministic finite automaton} (NFA) is a 5-tuple  $(Q,\Sigma, \delta, q_0,F)$, where $Q$ is a finite set of states, $\Sigma$ is a finite input alphabet, $q_0 \in Q$ is the starting state, $F \subseteq Q$ is the set of accepting states, and $\delta : Q \times \Sigma \times Q$ is the transition relation. A word $w = c_0c_1 \ldots c_n$ is accepted by an NFA $A$ iff there is a sequence of states $s_0,s_1,\ldots,s_{n+1}$ such that $s_0 = q_0$, $(s_i, c_i, s_{i+1}) \in \delta$ for $0 \leq i \leq n$, and $s_{n+1} \in F$. The language of $A$, denoted $L(A)$ is the set of words it accepts, i.e., $L(A) = \{ w \mid w \text{ is accepted by A} \}$. If $U$ is an NFA with start state $q$ then $U_{q'}$ is the same automaton but with $q'$ as start state; that is, if $U = (Q,\Sigma,\delta,q,F)$ and $q' \in Q$, then $U_{q'} = (Q,\Sigma,\delta,q',F)$.

A \textit{two-way alternating automaton on finite words} (2AFW) is a 5-tuple $(Q, \Sigma, \delta, q_0, F)$, appropriately defined later in Section 5.

If $A = (Q,\Sigma,\delta,q_0,F)$ is an NFA or a 2AFW, we define $\final(q) = 1$ if $q \in F$, $\final(q) = 0$ otherwise.

We use the abbreviations $\varphi_1 \vee \varphi_2 \equiv \neg(\neg\varphi_1 \wedge \neg\varphi_2)$, $\varphi_1 \rightarrow \varphi_2 \equiv \neg \varphi_1 \vee \varphi_2$,
and $\varphi_1 \leftrightarrow \varphi_2 \equiv (\varphi_1 \rightarrow \varphi_2) \wedge (\varphi_2 \rightarrow \varphi_1)$.
If $\varphi$ is a formula then $AP(\varphi)$ denotes the set of atomic propositions that occur in $\varphi$. If $\varphi$ is a formula then $\nnf(\varphi)$ an equivalent formula in negation normal form.
We use $|\pi|$ to denote the length of $\pi$, $\pi_i$ to denote the $i$th element of $\pi$, and $[\pi]$ to denote the set $\{0,\ldots,|\pi|-1\}$ of positions in $\pi$.
The symbol \textbf{true} denotes a propositional formula that always evaluates to true and the symbol \textbf{false} denotes a propositional formula that always evaluates to false.

\section{Related Work}
\label{ch:relatedwork}
The use of temporal logic for use in formal program verification was first proposed by Pnueli in 1977 \cite{pnueli77}, using an adaptation of Tense Logic, which was first developed by Prior in 1957 to facilitate precise reasoning about time for use in solving philosophical problems \cite{prior57}. Pnueli's adaptation, which came to be known as Linear Temporal Logic (LTL), was tailored to have the necessary expressiveness to specify temporal dependencies in programs. Pnueli's LTL was originally used over infinite traces. More recently LTL interpreted over finite traces (LTL$_f$) was proposed in \cite{ldlf} and shown to be a useful specification language \cite{aalst,bienvenu,gabaldon,pevsic,wilke}.

Regular expressions over propositional formulas (RE$_f$), where a propositional formula $\varphi$ serves as a representation of the set of propositional interpretations that satisfy $\varphi$, are a more expressive alternative to LTL$_f$, with regular expressions being equivalent in expressiveness to monadic second-order logic and LTL$_f$ being equivalent in expressiveness to monadic first-order logic. Nevertheless, RE$_f$ lacks direct constructs for negation and conjunction. These can be added by allowing complementation and intersection of regular expressions, but this results in nonemptiness checking being of nonelementary complexity even for star-free regular expressions.

Linear Dynamic Logic on Finite Traces (LDL$_f$) is a temporal logic that extends propositional logic with operators that use regular expressions to express temporal constraints \cite{ldlf}. It has the formal expressiveness of RE$_f$, while having constructs for negation and conjunction. Notably, satisfiability checking for LDL$_f$ formulas is PSPACE-complete, the same as for LTL$_f$. Thus, LDL$_f$ is more expressive than LTL$_f$ at no additional cost \cite{ldlf}. Pure-Past Linear Dynamic Logic on Finite Traces (PLDL$_f$) is a variant of LDL$_f$ that allows for the expression of present and past temporal constraints \cite{pldlf}, as opposed to LDL$_f$, which can express present and future temporal constraints.

\section{\texorpdfstring{Introducing ALDL$_f$}{Introducing ALDLf}}

\label{ch:introducingaldlf}
Automata Linear Dynamic Logic on Finite Traces (ALDL$_f$) is a variant of LDL$_f$ that extends propositional logic with operators that use nondeterministic finite automata (NFA) to express temporal constraints. It also provides direct constructs for past modalities. The use of NFAs rather than regular expressions offers two advantages. The first advantage is that NFAs are exponentially more succinct than regular expressions \cite{gruber1,gruber2}. Satisfiability checking of ALDL$_f$ formulas is PSPACE-complete, so this succinctness comes at no cost in complexity. The second advantage is that sometimes NFAs can be a more convenient form of expression than regular expressions. It is sometimes easier in practice to describe behavior using a state-based formalism, as can be observed by the the use of such formalism in industrial language. For example, SystemVerilog Assertions (SVA) uses local variables \cite{sva}, which are a form of state.

Moreover, ALDL$_f$ and its satisfiability-checking procedure can be used for LDL$_f$, by converting an LDL$_f$ formula to an equivalent ALDL$_f$ formula. This can be done by converting each occurrence of regular expression in the LDL$_f$ formula to an equivalent NFA using a standard algorithm, e.g., Thompson's construction \cite{thompson}. Having an alternative method to perform LDL$_f$ satisfiability checking to the one provided in \cite{ldlf} is advantageous because there are syntactially valid LDL$_f$ formulas for which the construction of the transition function for the AFW provided in \cite{ldlf} may not terminate. A simple example is $\varphi = \dbox{p^{**}}\textbf{false}$, where $p$ is an atomic proposition. It is easy to see that following the AFW transition function construction provided in Theorem 15 of \cite{ldlf} on $\varphi$ results in a loop. Nevertheless, converting $\varphi$ to an ALDL$_f$ formula and performing the satisfiability checking procedure provided here is well founded.

We are now ready to formally define ALDL$_f$. First we define its syntax, then its semantics, and finally we provide the Fischer-Ladner Closure for ALDL$_f$, which is a generalization of the concept of subformula that will be crucial in future sections.

\subsection{\texorpdfstring{ALDL$_f$ Syntax}{ALDLf Syntax}}
The first step in defining ALDL$_f$ is to define its syntax. We define two objects, the syntax of an ALDL$_f$ formula and the structure of a path automaton.
We define the objects by mutual recursion because an ALDL$_f$ formula may have path automata as components
and a path automaton may have ALDL$_f$ formulas in its alphabet in the form of \textit{tests}, which are a special construct
that syntactically consists of an ALDL$_f$ formula followed by a question mark.

  Let $AP$ be the set of atomic propositions, $PForm$ be the set of propositional formulas over $AP$, and $PForm^- = \{ \zeta^- \mid \zeta \in PForm \}$ (note that $\zeta^-$ is the symbol $\zeta$ with a marker; it is not negation). The alphabet of a path automaton includes elements from
    $PForm$ and $PForm^-$, which are used to express temporal modalities. The elements of
    $PForm$ are used to express future modalities, while the elements of $PForm^-$ are used to
    express past modalities. As mentioned earlier, a path automaton's alphabet may contain ALDL$_f$ formulas in the form
    of tests, which allow the automaton to make a transition without consuming input so long as the test is
    satisfied, which is explained further in the semantics section.
    A question mark following a formula indicates that the formula is a test.

    \begin{definition}\label{def:aldlf_mutual} We define syntax of an ALDL$_f$ formula and the structure of a path
        automaton by mutual recursion:

    \textbf{(1.1):} An ALDL$_f$ formula $\varphi$ is generated by the following grammar:

  $\varphi ::= P \mid (\neg \varphi) \mid (\varphi_1 \wedge \varphi_2) \mid (\varphi_1 \vee \varphi_2)
  \mid (\dbox{\alpha} \varphi) \mid (\bbox{\alpha} \varphi)$, where $P \in AP$ and $\alpha$ is a path automaton:

    \textbf{(1.2):} A path automaton is an NFA $(R,T,\Delta,r,G)$, where
    \begin{itemize}
        \item $R$ is a finite set of states
        \item $T$ is the alphabet and consists of a finite subset of $PForm \, \cup \, PForm^- \, \cup \, \{ \psi \? \mid \psi \text{ is an} \allowbreak
        \text{ALDL$_f$ formula} \}$
        \item $r \in R$ is the start state
        \item $G \subseteq R$ is the set of accepting states
        \item $\Delta \subseteq R \times T \times R$ is the transition relation
        \item Each state must occur in the transition relation, i.e., for all $s \in R$, there exists a $d \in \Delta$ such that either $d = (s, \tau, s')$ or $d = (s', \tau, s)$, for some $s' \in R$ and $\tau \in T$.
    \end{itemize}

\end{definition}

We now define the \emph{size} of a path automaton:
\begin{definition}
The size of a path automaton $\alpha =(R, T, \Delta, r, G)$, denoted $|\alpha|$, is $2|\Delta|\log(|R|)$ + $\sum\limits_{\psi\? \in T}^{} |\psi|$.
\end{definition}

We use $2|\Delta|\log (|R|)$,  as each transition has both a start and an end state and we need $\log(|R|)$ to denote $|R|$ states. (Note that the cardinality of $R$ is at most the cardinality of $\Delta$ because of the requirement that each state of the automaton occurs in the transition relation.)

Now we can describe the \emph{size} of formulas. The size of a formula is (roughly) how many characters it takes to write down, so each occurrence of parenthesis, connective, $<$, $>$, [, and ] contribute 1 to the size. Formally: 
\begin{definition}
  The size of an ALDL$_f$ formula $\varphi$, denoted $|\varphi|$, is defined through structural induction:

  \textbf{Base case:} $\varphi = P$, where $P$ is an atomic proposition. Then $|\varphi| = 1$.

  \textbf{Inductive step:}

  \textit{Case (1):} $\varphi = (\neg\varphi')$, where $\varphi'$ is an ALDL$_f$ formula. Then $|\varphi| = 3 + |\varphi'|$.

  \textit{Case (2):} $\varphi = (\varphi_1 \wedge \varphi_2)$. Then $|\varphi| = 3 + |\varphi_1| + |\varphi_2|$.

  \textit{Case (3):} $\varphi = (\varphi_1 \vee \varphi_2)$. Then $|\varphi| = 3 + |\varphi_1| + |\varphi_2|$.

  \textit{Case (4):} $\varphi = (\dbox{\alpha}\varphi')$. Then $|\varphi| = 4 + |\alpha| + |\varphi'|$.

  \textit{Case (4):} $\varphi = (\bbox{\alpha}\varphi'$). Then $|\varphi| = 4 + |\alpha| + |\varphi'|$.

  \end{definition}

  Note that the size of a path automata depends on the size of a formula and the size of a formula depends on the size of a path automaton. No cycles are allowed and this can be done carefully by mutual induction.

\subsection{\texorpdfstring{ALDL$_f$ Semantics}{ALDLf Semantics}}

Having defined the syntax of ALDL$_f$, we proceed by defining the meaning of ALDL$_f$ formulas. We introduce a ternary relation for satisfaction of ALDL$_f$ formulas and a quaternary relation for
satisfaction of path automata and define them by mutual recursion.

\begin{definition}\label{def:aldlf_semantics}
If the conditions in 4.1 below hold, then
an ALDL$_f$ formula is true within a trace $\pi \in (2^{AP})^+$ at position $i$,
    where $0 \leq i < |\pi|$, (denoted $\pi,i \models \varphi$),
    and an ALDL$_f$ path automaton $(R,T,\Delta,r,G)$ is satisfied by the subsequence of $\pi$ from position $i$ to position $j$ (denoted
    $\pi,i,j \models (R,T,\Delta,r,G)$ if the conditions in \ref{def:aldlf_semantics}.2 hold:

    \textbf{(\ref{def:aldlf_semantics}.1)}:
    \begin{enumerate}[label=A\arabic*.]
    \item $\pi,i \models P$ iff $P \in \pi_i$ ($P \in AP$)
    \item $\pi,i \models (\neg\varphi$) iff $\pi,i \not\models \varphi$
    \item $\pi,i \models (\varphi_1 \wedge \varphi_2)$ iff $\pi,i \models \varphi_1$ and $\pi,i \models \varphi_2$
    \item $\pi,i \models (\varphi_1 \vee \varphi_2)$ iff $\pi,i \models \varphi_1$ or $\pi,i \models \varphi_2$
    \item $\pi,i \models (\dbox{(R,T,\Delta,r,G)}\varphi)$ iff there exists $j$ such that $0 \leq j < |\pi|$, $\pi,i,j \models (R,T,\Delta,r,G)$, and $\pi,j \models \varphi$
    \item $\pi,i \models (\bbox{(R,T,\Delta,r,G)}\varphi)$ iff for all $0 \leq j < |\pi|$ such that $\pi,i,j \models (R,T,\Delta,r,G)$ we have that $\pi,j \models \varphi$
    \end{enumerate}

    \textbf{(\ref{def:aldlf_semantics}.2)}:

$\pi,i,j \models (R,T,\Delta,r,G)$ iff $i = j$ and $r \in G$ or there exists a finite sequence $(r_0,k_0),\allowbreak (r_1,k_1),\ldots,(r_n,k_n)$ over $(R \times [\pi])$,  such that $r_0 = r$, $k_0 = i$, $r_n \in G$, $k_n = j$, and for all $0 \leq m < n$ we have that one of the following holds:

\begin{enumerate}
    \item $k_{m+1} = k_m + 1$ and there is some $(r_m,\zeta,r_{m+1}) \in \Delta$ such that $\pi,k_m \models \zeta$
    \item $k_{m+1} = k_m - 1$ and there is some $(r_m,\zeta^-,r_{m+1}) \in \Delta$ such that $\pi,k_m \models \zeta$
    \item $k_{m+1} = k_m$ and there is some $(r_m,\psi\?,r_{m+1}) \in \Delta$ such that $\pi,k_m \models \psi$
\end{enumerate}

\end{definition}
Intuitively, a path automaton divides a trace into a prefix and a remainder (there may be more than one such division). In the case of a formula of the form $\dbox{U}\varphi$, a trace $\pi$ will satisfy this formula just in case that\textit{there exists} a prefix of $\pi$ that is in the language of $U$ such that the remainder of $\pi$ is in the language of $\varphi$. In the case of a formula of the form $\bbox{U}\varphi$, a trace $\pi$ will satisfy this formula just in case that \textit{for every} prefix of $\pi$ that is in the language of $U$ the remainder of $\pi$ is in the language of $\varphi$. Note that $\dbox{U}\varphi \equiv \neg \bbox{U}\neg \varphi$
and $\bbox{U}\varphi \equiv \neg \dbox{U}\neg\varphi$.

A trace $\pi$ is in the language of an ALDL$_f$ formula $\varphi$, denoted $\pi \models
\varphi$, if $\pi,0 \models \varphi$. If $\zeta$ is a propositional ALDL$_f$ formula, the notation $\pi_i \models \zeta$ may be used as a synonym for $\pi,i \models \zeta$, in order to emphasize the fact that the truth value of $\zeta$ at $i$ depends only on the contents of $\pi_i$. An ALDL$_f$ formula $\varphi$ is in \textit{negation normal form}
if all instances of the negation operator ($\neg$) that occur in $\varphi$, including those that occur in tests that
are included in a path automaton's alphabet, are applied
to atomic propositions. An ALDL$_f$ formula can be converted to negation normal form by replacing all instances of
abbreviations with their definitions, converting all instances of $\neg\bbox{U}\varphi$ to $\dbox{U}\neg\varphi$ and
all instances of $\neg\dbox{U}\varphi$ to $\bbox{U}\neg\varphi$, applying DeMorgan's rules to make all
negations appear as inwardly nested as possible, and then eliminating all double negations. If $\varphi$ is a formula, then nnf($\varphi$) is the result of performing this procedure on $\varphi$.

\subsection{\texorpdfstring{Fischer-Ladner Closure for ALDL$_f$}{Fischer-Ladner Closure for ALDLf}}

As we have just seen, the satisfaction relation for ALDL$_f$ is defined recursively, i.e., the satisfaction of a formula is defined in terms of the satisfaction of other formulas with the exception of some base cases. This is similar to how satisfaction is defined for LTL. Our strategy to translate ALDL$_f$ formulas to automata will be to use states of the automaton to represent formulas that are ``reachable'' during the recursive evaluation of satisfaction, which is similar to how a similar translation is done for LTL. It is necessary to know in advance which formulas may need to be evaluated in order to construct the automaton. In the case of an LTL formula $\varphi$, the subformulas of $\varphi$ are sufficient. But in the case of modal logics such as ALDL$_f$,
the concept of subformula is too weak and so we need to use something more broad. The \textit{Fischer-Ladner closure}, introduced in \cite{fischer79} for use in Propositional Dynamic Logic (PDL), is a generalization of subformulas for modal logics. Here we define the Fischer-Ladner closure for ALDL$_f$.

\begin{definition}
    The Fischer-Ladner closure for an ALDL$_f$ formula $\varphi$, denoted $CL(\varphi)$, is the smallest set for which
    the following conditions hold:
\begin{enumerate}
    \item $\varphi$ is in $CL(\varphi)$
    \item If $(\neg\varphi') \in CL(\varphi)$ and $\varphi'$ is not of the form $\neg\varphi''$ then $\varphi' \in CL(\varphi)$
    \item If $(\varphi_1 \wedge \varphi_2) \in CL(\varphi)$ then $\varphi_1,\varphi_2 \in CL(\varphi)$
    \item If $(\varphi_1 \vee \varphi_2) \in CL(\varphi)$ then $\varphi_1,\varphi_2 \in CL(\varphi)$
    \item If $(\dbox{U}\varphi') \in CL(\varphi)$,
    where $U = (R,T,\Delta,r,G)$, then
    \begin{enumerate}[label=(\roman*)]
        \item $\varphi' \in CL(\varphi)$
        \item For all $r' \in R$, $\dbox{(R,T,\Delta,r',G)}\varphi' \in CL(\varphi)$
        \item For all $\psi\? \in T$, $\psi \in CL(\varphi)$
    \end{enumerate}
    \item If $(\bbox{U}\varphi') \in CL(\varphi)$, where $U = (R,T,\Delta,r,G)$, then
    \begin{enumerate}[label=(\roman*)]
        \item $\varphi' \in CL(\varphi)$
        \item For all $r' \in R$, $\bbox{(R,T,\Delta,r',G)}\varphi' \in CL(\varphi)$
        \item For all $\psi\? \in T$, $\nnf(\neg\psi) \in CL(\varphi)$
    \end{enumerate}

\end{enumerate}
\end{definition}

\begin{lemma}

  Let $\varphi$ be an ALDL$_f$ formula. Then the cardinality of $CL(\varphi)$, denoted $|CL(\varphi)|$, is less than or equal to the size of $\varphi$.

\end{lemma}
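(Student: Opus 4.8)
The strategy is structural induction on $\varphi$, showing a stronger statement amenable to induction: that $|CL(\varphi)| \le c\cdot|\varphi|$ for a suitable constant $c$ (I will pin down $c$ at the end; a value like $c = 1$ or $c=2$ should work, once the bookkeeping is done). The key observation driving the whole argument is clause 5(ii) (and symmetrically 6(ii)) of the closure definition: when $\dbox{U}\varphi'$ is in $CL(\varphi)$ with $U=(R,T,\Delta,r,G)$, the closure adds the formulas $\dbox{U_{r'}}\varphi'$ for every $r'\in R$ — that is, exactly $|R|$ formulas obtained by varying the start state of $U$, plus $\varphi'$, plus one formula $\psi$ for each test $\psi\? \in T$. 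So a single $\dbox{\cdot}$-node contributes roughly $|R|$ new closure elements on its own account, and this is precisely matched by the ``$+|R|$'' term in the size definition (Case 4/5), which is what makes the linear bound possible rather than, say, quadratic.

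The plan is as follows. First I would set up the induction hypothesis: for every proper subformula $\psi$ occurring in $\varphi$ (including the operands $\varphi'$ of modal subformulas and the test formulas $\psi\?$ appearing in alphabets of path automata), $|CL(\psi)| \le c|\psi|$. Then I would do a case analysis on the outermost connective of $\varphi$. The Boolean cases ($\neg$, $\wedge$, $\vee$) are routine: $CL(\varphi) \subseteq \{\varphi\} \cup CL(\varphi_1) \cup CL(\varphi_2)$ (or $\cup CL(\varphi')$), so $|CL(\varphi)| \le 1 + |CL(\varphi_1)| + |CL(\varphi_2)| \le 1 + c|\varphi_1| + c|\varphi_2| \le c(3 + |\varphi_1|+|\varphi_2|) = c|\varphi|$ as long as $c\ge 1$. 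The interesting case is $\varphi = \dbox{U}\varphi'$ with $U=(R,T,\Delta,r,G)$ (the $\bbox{U}\varphi'$ case is identical). Here I claim
\[
CL(\varphi) \;\subseteq\; \{\,\dbox{U_{r'}}\varphi' \mid r'\in R\,\}\ \cup\ CL(\varphi')\ \cup\ \bigcup_{\psi\? \in T} CL(\psi).
\]
This inclusion needs a short argument: every element of $CL(\varphi)$ arises by repeatedly applying the closure clauses starting from $\varphi$; clause 5(ii) only ever produces formulas $\dbox{U_{r'}}\varphi'$ (note the path automaton $U$ and the operand $\varphi'$ are unchanged, only the start state varies), clause 5(i) drops us into $CL(\varphi')$, and clause 5(iii) drops us into $CL(\psi)$ for test formulas; once inside $CL(\varphi')$ or $CL(\psi)$ the remaining clauses keep us there. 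Given the inclusion, $|CL(\varphi)| \le |R| + |CL(\varphi')| + \sum_{\psi\?\in T}|CL(\psi)| \le |R| + c|\varphi'| + c\sum_{\psi\?\in T}|\psi|$, and since the test formulas $\psi$ are disjoint subformulas sitting inside the description of $U$, the sum $|R| + \sum_{\psi\?\in T}|\psi|$ is bounded by (a constant times) the contribution $4 + |R| + \text{(test sizes)}$ that $U$ and its tests make to $|\varphi|$; combined with the $|\varphi'|$ term this yields $|CL(\varphi)| \le c|\varphi|$.

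The main obstacle I anticipate is making the last step fully rigorous, because the stated size definition charges $|R|$ for a path automaton but charges nothing explicitly for the test formulas in its alphabet $T$. One therefore has to be careful: either (a) argue that the ``real'' size of $\varphi$ should be taken to include the sizes of embedded test formulas (which is the intended reading, since a test $\psi\?$ literally contains the formula $\psi$), and observe that the definition as written is implicitly assuming test formulas are accounted for — or (b) bound $\sum_{\psi\?\in T}|\psi|$ in terms of $|R|$ and $|\Delta|$ using the structural constraints on path automata. Route (a) is cleaner and is almost certainly what the authors intend; I would note explicitly that $|\varphi|$ as defined already recursively includes $|R|$, and that a careful reading of Case 4/5 together with the mutual-recursion structure means the test subformulas are subformulas of $\varphi$ and hence their sizes are subsumed. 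A second, minor subtlety is that a path automaton $U$ may occur multiple times inside $\varphi$ (or the same test formula may occur in several alphabets); this only helps, since it means $CL$-elements may be shared, and an upper bound is unaffected. With these points addressed the induction closes and gives $|CL(\varphi)| = O(|\varphi|)$.
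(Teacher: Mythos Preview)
Your approach---structural induction on $\varphi$, with routine Boolean cases and the key observation that clause~5(ii) contributes exactly $|R|$ new closure elements, matched by the $+|R|$ term in the size definition---is the same as the paper's.

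One substantive difference in the modal case: you (correctly, given that $CL(\varphi)$ is a closure) include $\bigcup_{\psi?\in T} CL(\psi)$ in your upper bound for $CL(\dbox{U}\varphi')$, whereas the paper writes only $\{\psi \mid (s,\psi?,s')\in\Delta\}$, i.e.\ the test formulas themselves without their closures, and bounds the size of this set by $|\Delta|$. The paper then compares the resulting bound $1+|\Delta|+|CL(\varphi')|$ against $|\varphi|=4+|R|+|\varphi'|$ using $|R|\le|\Delta|$. The obstacle you flag---that the size definition charges $|R|$ for a path automaton but nothing for the test formulas in its alphabet---is genuine; route~(b) cannot work since tests can be arbitrary ALDL$_f$ formulas, and route~(a) amounts to amending the definition. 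The paper's proof simply elides this point by not recursing into tests, so your plan is at least as complete as the paper's, and your caveat is well placed rather than a defect in your argument.
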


\pagebreak[5]
\begin{proof}
  We prove the statement by structural induction.

  \textbf{Base case:} $\varphi = P$, where $P$ is an atomic proposition. Now, $CL(P) = \{P\}$, so $|CL(\varphi)| = 1$, while $|\varphi| = 1$, so $|CL(\varphi)| = |\varphi|$.

  \textbf{Inductive step:}

  \textit{Case (1):} $\varphi = (\neg\varphi')$. Then $CL(\varphi) = \{\neg\varphi'\} \cup CL(\varphi')$, so $|CL(\varphi)| = 1 + |CL(\varphi')|$. Now, $|\varphi| = 3 + |\varphi'|$. By the inductive hypothesis we have that $|CL(\varphi')| \leq |\varphi'|$. So $|CL(\varphi)| < |\varphi|$.

  \textit{Case (2):} $\varphi = (\varphi_1 \wedge \varphi_2)$. So $CL(\varphi) = \{\varphi\} \cup CL(\varphi_1) \cup CL(\varphi_2)$. So $|CL(\varphi)| = 1 + |CL(\varphi_1)| + |CL(\varphi_2)|$. Now, $|\varphi| = 3 + |\varphi_1| + |\varphi_2|$. By the inductive hypothesis we have that $|CL(\varphi_1)| \leq |\varphi_1|$ and $|CL(\varphi_2) \leq |\varphi_2|$, so $|CL(\varphi)| < |\varphi|$.

  \textit{Case (3):} $\varphi = (\varphi_1 \vee \varphi_2)$. So $CL(\varphi) = \{\varphi\} \cup CL(\varphi_1) \cup CL(\varphi_2)$. So $|CL(\varphi)| = 1 + |CL(\varphi_1)| + |CL(\varphi_2)|$. Now, $|\varphi| = 3 + |\varphi_1| + |\varphi_2|$. By the inductive hypothesis we have that $|CL(\varphi_1)| \leq |\varphi_1|$ and $|CL(\varphi_2)| \leq |\varphi_2|$, so $|CL(\varphi)| < |\varphi|$.

  \textit{Case (4):} $\varphi = \dbox{(R,T,\Delta,r,G)}\varphi'$. Let $CL(\varphi) = \{\varphi\} \cup \{(R,T,\Delta,r',G)\varphi' \mid r' \in R \} \cup \{ \psi \mid \psi\? \in T\}$. So $|CL(\varphi)| = 1 + |R| + |\bigcup\limits_{\psi\? \in T}CL(\psi)|$. Now, $|\varphi| = 4 + 2|\Delta|\log(|R|) + \sum\limits_{\psi\? \in T}|\psi|$. Because of the requirement that each state in $R$ must occur in $\Delta$ we have that $|R| \leq |\Delta|$. By the inductive hypothesis we have that for all $\psi\? \in T$, $|CL(\psi)| \leq |\psi|$. So $|CL(\varphi)| < |\varphi|$.

  \textit{Case (5):} $\varphi = \bbox{(R,T,\Delta,r,G)\varphi'}$. The reasoning for this case is the same as that of Case (4).
\end{proof}

\subsection{Example}

Here we provide an ALDL$_f$ formula that uses both future and past modalities as well as both existential and universal modal operators to define an easy-to-understand property.

We start by defining the path automata the formula uses. The first path automaton, $\alpha_1$, simply reads backward once and then accepts. Formally, $\alpha_1 = (R_{\alpha_1}, T_{\alpha_1}, r_{\alpha_1}, G_{\alpha_1}, \Delta_{\alpha_1})$, where

\begin{itemize}
  \item $R_{\alpha_1} = \{ p_0, p_1 \}$
  \item $T_{\alpha_1} = \{ \textbf{true}^- \}$
  \item $r_{\alpha_1} = p_0$
  \item $G_{\alpha_1} = \{ p_1  \}$
  \item $\Delta_{\alpha_1} = \{ p_0, \textbf{true}^-, p_1 \}$
\end{itemize}

The second path automaton, $\alpha_2$, reads backward through the trace until it finds a position at which the proposition $a$ is true and then accepts. Formally, $\alpha_2 = (R_{\alpha_2}, T_{\alpha_2}, r_{\alpha_2}, G_{\alpha_2}, \Delta_{\alpha_2})$, where

\begin{itemize}
  \item $R_{\alpha_2} = \{ q_0, q_1 \}$
  \item $T_{\alpha_2} = \{ a\?, \textbf{true}^- \}$
  \item $r_{\alpha_2} = q_0$
  \item $G_{\alpha_2} = \{ q_1 \}$
  \item $\Delta_{\alpha_2} = \{ (q_0, \textbf{true}^-, q_0), (q_0, a\?, q_1) \}$
\end{itemize}

Finally, the third path automaton, $\alpha_3$, reads forward through the trace until it finds a position at which the proposition $a$ is true and then accepts. Formally, $\alpha_3 = (R_{\alpha_3}, T_{\alpha_3}, r_{\alpha_3}, G_{\alpha_3}, \allowbreak\Delta_{\alpha_3})$, where

\begin{itemize}
  \item $R_{\alpha_3} = \{ s_0, s_1 \}$
  \item $T_{\alpha_3} = \{ a\?, \textbf{true} \}$
  \item $r_{\alpha_3} = s_0$
  \item $G_{\alpha_3} = \{ s_1 \}$
  \item $\Delta_{\alpha_1} = \{ (s_0, \textbf{true}, s_0), (s_0, a\?, s_1) \}$
\end{itemize}

We can use these path automata to create a formula that defines the language of traces in which $a$ being true anywhere in the trace implies that $a$ is true at the first position in the trace: $ \varphi = \bbox{\alpha_3} \dbox{\alpha_2} \bbox{\alpha_1} \textbf{false}$. For the purpose of explanation, we break down $\varphi$ into these subformulas:

\begin{itemize}
  \item $\varphi_1 = \bbox{\alpha_1}\textbf{false}$

  \item $\varphi_2 = \dbox{\alpha_2}\varphi_1$

  \item $\varphi_3 = \bbox{\alpha_3} \varphi_2$
\end{itemize}

In $\varphi_1$, $\bbox{\alpha_1}$ imposes the obligation to satisfy the unsatisfiable subformula \textbf{false} at the position in the trace immediately preceding the position at which $\varphi_1$ is evaluated, with the one exception being when $\varphi_1$ is evaluated at the first element in the trace. When $\varphi_1$ is evaluated at the first position in the trace it is vacuously true because there is no $j \geq 0$ such that $\pi,i,j \models \alpha_1$. So $\varphi_1$ can be used to find the beginning of the trace.

In $\varphi_2$, $\dbox{\alpha_2}$ looks backward through the trace to find an element at which $a$ is true. Because $\varphi_2$ uses the existential modal operator, it must find a position in the trace where $a$ is true and $\varphi_1$ is satisfied. Thus, for $\varphi_2$ to be true anywhere in the trace, it must be the case that $a$ is true at the first position in the trace.

In $\varphi_3$, $\bbox{\alpha_3}$ looks forward through the trace to find positions at which $a$ is true. Because $\varphi_3$ uses the universal modal operator, all such positions trigger the obligation to satisfy $\varphi_2$. Thus, if $a$ is true anywhere in the trace, then it must be true that $a$ is true at the first position of the trace.

\section{Two-way Alternating Automata on Finite Words (2AFW)}

\label{ch:introducing2afw}

In order to serve as an intermediate step in the translation of formulas of ALDL$_f$ to equivalent
NFAs, we use a \textit{two-way alternating automaton on finite words} (2AFW). A useful history and
taxonomy of two-way alternating automata is provided by Kapoutsis and Zakzok \cite{kapoutsis}. The 2AFW we
define here is similar to that of Geffert and Okhotin \cite{geffert}. The translation of ALDL$_f$ formula to NFA will first convert
an ALDL$_f$ formula to an equivalent 2AFW and then convert the result to an equivalent NFA.

An automaton is \textit{two-way} if its transition function allows the read position of the
input to not only move forward, but to stay in place or move backward as well. The direction
that the read position should move are represented by one of $\{-1,0,1\}$; -1 indicates
that the read position should move backward, 0 that it should stay in place, and 1 that it
should move forward.

In a 2AFW, the transitions combine nondeterminism and universality to allow nondeterministic
transitions to sets of state-direction pairs. Because of universal transitions a run of a 2AFW
is a tree, and nondeterminism allows for the possibility of there being more than one run of
a 2AFW on the same input. The combination of nondeterminism and universality is achieved by
representing each transition as a formula of propositional logic with state-direction pairs
as propositional atoms; the set of possible transitions is the set of state-direction pairs that
satisfy the formula. Disjunctions in a formula allow for nondeterminism, while conjunctions
allow for universality.

For a given set, we define the relevant formulas and their satisfaction condition:

    \begin{definition}\label{def:positive_boolean}
If $X$ is a set, then $B^+(X)$ is the set of positive Boolean formulas over the elements of
    $X$, that is, $B^+(X)$ is the closure of $X$ under disjunction and conjunction; note
    the lack of negation. Additionally, for all $X$, $B^+(X)$ contains the formulas
    \textbf{true} and \textbf{false}.  A set
$M \subseteq X$ \textit{satisfies} a formula $\theta \in B^+(X)$ if the truth
assignment that assigns true to the members of $M$ and false to the members of
$X - M$ satisfies $\theta$.
\end{definition}
We also use $\mymodels(\theta) = \{M \mid M \models \theta \}$.

We are now ready to define the 2AFW:

    \begin{definition}\label{def:2afw_simple}
A \textit{two-way alternating automaton on finite words} is a tuple $A = (Q, \Sigma, \delta, \allowbreak q_0,
F)$, where $Q$ is a finite nonempty set of states, $\Sigma$ is a finite nonempty alphabet, $q_0
\in Q$ is the initial state, $F \subseteq Q$ is a set of accepting states, and $\delta :
Q \times \Sigma \rightarrow B^+(Q \times \{-1, 0, 1\})$ is the transition function.
    \end{definition}

Runs of a 2AFW are trees whose nodes are labeled to indicate the state of the automaton and
    the current read position in the input. We call such labels \textit{configurations}: A \textit{configuration} of $A$ is a member of $(Q \times \mathbb{N}) \cup \{Accept\}$. The $Accept$
configuration represents a successful computation while configurations of the other form represent
computations that are still in progress.  If a
configuration $c$ is not $Accept$ then we call $c$'s state component the state and
we call $c$'s integer component the position. The state component keeps track of which state the automaton
is in during the run while the position component keeps track of where the input is being read from during the run.

In a 2AFW transition, the read position of the input is either unchanged or it moves forward or
backward exactly one position. We formalize this condition with the following relation:

    \begin{definition}\label{def:sderives}
We define the binary relation $\sderives{\tau}$ over configurations, where $\tau \in Q \times
\{-1,0,1\}$, as follows:

$(q,i) \sderives{(p,-1)} (p,i-1)$ for $i > 0$

$(q,i) \sderives{(p,0)} (p,i)$

$(q,i) \sderives{(p,1)} (p, i+1)$

    \end{definition}

    We can now define runs on a 2AFW:

\begin{definition}\label{def:run}
A $run$ of $A$ on a finite, nonempty word $w = w_0w_1 \ldots w_n$ from position $k$, where $n \in \mathbb{N}$ and $0 \leq k \leq n$, is a configuration-labeled tree $\rho$ such that the root of $\rho$ has configuration $(q_0, k)$ and if $x$ is a node of $\rho$ and has a configuration of the form $(q,i)$ then $0 \leq i \leq n$ and the following holds. Let $\theta = \delta(q,w_i)$. If $\theta = \textbf{false}$ then $x$ has no children.  If $\theta = \textbf{true}$ then $x$ has one child whose configuration is $\textit{Accept}$. Otherwise, for some $M \in \mymodels(\theta)$, for all $\tau \in M$, $x$ has a child $y$ such that $(q,i) \sderives{\tau} \config (y)$, where $\config (y)$ denotes the configuration of $y$, and the number of $x$'s children is equal to $|M|$. If a run is described without including a position, the starting position is presumed to be 0.

A branch $b$ of $\rho$ is an \textit{accepting branch} if it is a finite branch that ends in a leaf labeled \textit{Accept} or it is an infinite branch and there is some accepting state $f \in F$ that occurs in infinitely many configuration labels in $b$.

A run $\rho$ is \textit{accepting} if all of its branches are accepting branches.
\end{definition}

An automaton $A$ accepts a word $w$ from position $i$ if there is an accepting run of $A$ on $w$ from $i$.
$A$ defines the language $\mathcal{L}(A) = \{ (w,i) \mid A \text{ accepts } w \text{ from } i \}$.

\section{\texorpdfstring{From ALDL$_f$ Formulas to 2AFWs}{From ALDLf Formulas to 2AFWs}}
\label{ch:2afwtoaldlf}

Having defined the 2AFW, we are now ready to begin the first part of the ALDL$_f$ formula to NFW translation. The construction starts by converting from an ALDL$_f$ formula $\varphi$  in negation normal form to a 2AFW $A_\varphi$ that defines the same language, i.e., $\mathcal{L}(\varphi) = \mathcal{L}(A_\varphi)$.
Definition \ref{def:aldlf_to_2afw} provides the formal definition. We provide an intuitive explanation here.

The states of $A_\varphi$ are the Fischer-Ladner closure of
$\varphi$. The nodes of a run $\rho$ of $A_\varphi$ on a trace $\pi$
are labeled with states and positions (with
the exception of the special case of leaf nodes labeled with configuration \textit{Accept};
if a node $v$ of $\rho$ has state $\psi$ and position $i$ it is useful to think of the
subtree with $v$ as root to be a run of $A_\psi$ on $\pi$ from $i$.

Naturally, the start state of $A_\varphi$ is $\varphi$.

The alphabet of $A_\varphi$ is $2^{AP(\varphi)}$, i.e., elements of a trace. The set of accepting states consists of all states of the form $\bbox{U}\varphi$; this may seem counter-intuitive but this choice is explained below.

The transition function ensures that runs of $A_\varphi$ have the same recursive structure as the semantics of $\varphi$. Rule 1 corresponds to rule A1 of the semantics. Rule 2 corresponds to rule A2; since $\varphi$ is assumed to be in negation normal form we need only handle
negations of atomic propositions. Rules 3 and 4 correspond to rules A3 and A4, respectively.

Rule 5 corresponds to rule A5. If a path automaton gets ``stuck'', i.e., it is not in an accepting state and cannot make a transition, then all of the disjunctions evaluate to \textbf{false}. This is desirable because a stuck path automaton in this context represents that the first condition of rule A5 does not hold. Similarly, if the path automaton gets trapped in a loop and never reaches an accepting state, this also represents that the first condition of rule A5 does not hold. In this latter case, it results in a run with an infinite branch in which each node of the branch is labeled with a state of the form $\dbox{U}\varphi$. Since there are no states of the form $\bbox{U}\varphi$, this is a rejecting branch, as desired.

Since states of the form $\dbox{U}\varphi$ ``demand'' reaching an appropriate position that satisfies $\varphi$, getting ``stuck'' in such a state means that the ``demand'' has not been met. Thus, states of the form $\dbox{U}\varphi$ are not accepting states.

Rule 6 corresponds to rule A6. If a path automaton gets stuck, then all of the conjunctions will be empty and evaluate to \textbf{true}. This is desirable because it represents the vacuous satisfaction of rule A6. Similarly, if the path automaton becomes trapped in a loop and never reaches an accepting state, this also represents the vacuous satisfaction of rule A6. In a run, a trapped path automaton looping forever results in an infinite branch. Since every node on such a branch is labeled with states of the form $\bbox{U}\varphi$, it is an accepting branch, as desired,which is why we defined states of the form $\bbox{U}\varphi$ to be accepting.

\subsection{Construction}

We now formally define the construction:

\begin{definition}\label{def:aldlf_to_2afw}
Given an ALDL$_f$ formula $\varphi$ in negation normal form, we define a two-way alternating automaton $A_\varphi =
(Q,\Sigma,\delta,q_0,F)$, where $Q = CL(\varphi)$, $\Sigma = 2^{AP(\varphi)}$, $q_0 = \varphi$, $F
    \subseteq CL(\varphi)$ contains all formulas of the form $\bbox{U}\psi$, and
$\delta : Q \times \Sigma \rightarrow B^+(Q \times \{-1,0,1\})$ is defined for a propositional interpretation $\Pi$ by the following rules:

\begin{enumerate}
\item $\delta(P,\Pi) = \begin{cases}
  \textbf{true}, & \text{if } \Pi \models P \quad (P \in AP)\\
  \textbf{false}, & \text{otherwise}
\end{cases}$

\item $\delta(\neg P, \Pi) = \begin{cases}
 \textbf{true}, & \text{if } \Pi \not\models P \quad (P \in AP)\\
 \textbf{false}, & \text{otherwise}
  \end{cases}$

\item $\delta(\varphi_1 \wedge \varphi_2, \Pi) = (\varphi_1, 0) \wedge (\varphi_2, 0)$
\item $\delta(\varphi_1 \vee \varphi_2, \Pi) = (\varphi_1, 0) \vee (\varphi_2, 0)$
\\[3mm]
Note that in 3 and 4, the index of the word does not change and $\Pi$ will be read at a later stage.

Let the path automaton $U = (R,T,\Delta,r,G)$.

\item $\delta(\dbox{U}\varphi,\Pi) = \begin{cases}

\quad (\varphi,0)\\

\quad \vee \bigvee_{(r,\zeta,r') \in \Delta, \Pi \models \zeta}(\dbox{U_{r'}}\varphi, 1)\\

\quad \vee \bigvee_{(r,\zeta^-,r') \in \Delta,\Pi \models \zeta}(\dbox{U_{r'}}\varphi, -1)\\

\quad \vee \bigvee_{(r,\psi\?,r') \in \Delta} (\psi,0) \wedge (\dbox{U_{r'}}\varphi,0), & \text{if } r \in G,\\~\\

\quad \bigvee_{(r,\zeta,r') \in \Delta, \Pi \models \zeta}(\dbox{U_{r'}}\varphi, 1)\\

\quad \vee \bigvee_{(r,\zeta^-,r') \in \Delta,\Pi \models \zeta}(\dbox{U_{r'}}\varphi, -1)\\

\quad \vee \bigvee_{(r,\psi\?,r') \in \Delta} (\psi,0) \wedge (\dbox{U_{r'}}\varphi,0), & \text{otherwise}
\end{cases}$

\item $\delta(\bbox{U}\varphi,\Pi)= \begin{cases}

\quad (\varphi,0)\\

\quad \wedge \bigwedge_{r,\zeta,r' \in \Delta,\Pi \models \zeta}(\bbox{U_{r'}}\varphi,1)\\

\quad \wedge \bigwedge_{r,\zeta^-,r' \in \Delta, \Pi \models \zeta}(\bbox{U_{r'}}\varphi,-1)\\

\quad \wedge \bigwedge_{r,\psi\?,r' \in \Delta} (\nnf(\neg\psi),0) \vee (\bbox{U_{r'}}\varphi,0), & \text{if } r \in G\\~\\

\quad \bigwedge_{r,\zeta,r' \in \Delta,\Pi \models \zeta}(\bbox{U_{r'}}\varphi,1)\\

\quad \wedge \bigwedge_{r,\zeta^-,r' \in \Delta, \Pi \models \zeta}(\bbox{U_{r'}}\varphi,-1)\\

\quad \wedge \bigwedge_{r,\psi\?,r' \in \Delta} (\nnf(\psi),0) \vee (\bbox{U_{r'}}\varphi,0), & \text{otherwise.}
\end{cases}$
\end{enumerate}\vspace*{2mm}

    An empty disjunction is interpreted as \textbf{false} and an empty
    conjunction is interpreted as \textbf{true}.
    \end{definition}

Note that in 5 and 6, the two cases differ only in the existence of the first clause that depends on whether the starting state of the path automaton is an accepting state.

\begin{lemma}
  Let $\varphi$ be an ALDL$_f$ formula and let $A_{\varphi} = (Q, \Sigma, \delta, q_0, F)$ be the 2AFW that results from the construction in Definition 10. The number of states in $A_{\varphi}$, $|Q|$, is less than or equal to $|\varphi|$.

\end{lemma}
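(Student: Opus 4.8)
The plan is to observe that this statement is essentially immediate from the construction in Definition~\ref{def:aldlf_to_2afw} together with the Fischer--Ladner closure bound established earlier. First I would recall that the construction explicitly sets $Q = CL(\varphi)$; the state set of $A_\varphi$ is by definition nothing other than the Fischer--Ladner closure of $\varphi$. Hence $|Q| = |CL(\varphi)|$, and the problem reduces entirely to bounding the cardinality of the closure.

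Next I would invoke the lemma already proved in Section~\ref{ch:introducingaldlf}, which states that $|CL(\varphi)|$ is no more than linear in $|\varphi|$. Combining this with the equality $|Q| = |CL(\varphi)|$ yields the claim directly: $|Q|$ is no more than linear in $|\varphi|$. No further case analysis or induction is needed here, since all of the structural work — in particular the observation that the states $\dbox{U_{r'}}\varphi'$ and $\bbox{U_{r'}}\varphi'$ introduced for each $r' \in R$ contribute at most $|R| \le |\Delta|$ new formulas per modal operator, and that test subformulas contribute at most $|\Delta|$ more — was already absorbed into the proof of that earlier lemma.

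I do not anticipate a genuine obstacle; the only thing to be careful about is to state precisely what "linear" means (linear in $|\varphi|$, with constants independent of $\varphi$) and to make sure the chain $|Q| = |CL(\varphi)| \le c\cdot|\varphi|$ is spelled out so that the reader sees this is a one-line corollary of Definition~\ref{def:aldlf_to_2afw} and the closure lemma rather than a fresh argument.
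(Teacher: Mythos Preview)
Your argument is correct and in fact cleaner than the paper's own proof. You use the construction directly: Definition~\ref{def:aldlf_to_2afw} sets $Q = CL(\varphi)$, and Lemma~1 bounds $|CL(\varphi)|$ linearly in $|\varphi|$, so $|Q|$ is linear in $|\varphi|$ immediately.

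The paper instead argues via the transition function: it asserts that Lemma~1 gives $|\delta|$ linear in $|\varphi|$, then invokes Definition~1.2 to say every state of $Q$ occurs in some transition, concluding $|Q| \le |\delta|$. This detour is both unnecessary and somewhat garbled as written---Lemma~1 is about $|CL(\varphi)|$, not $|\delta|$, and Definition~1.2 concerns path automata rather than the 2AFW $A_\varphi$. Your route avoids these issues entirely by reading $Q = CL(\varphi)$ straight off the construction, which is the natural way to prove the lemma.
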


\begin{proof}
  By Lemma 1 we have that $|CL(\varphi| \leq |\varphi|$. Now, $Q = CL(\varphi)$, so $|Q| \leq |\varphi|$.
\end{proof}

\subsection{Example}

Here we provide an example of a 2AFW built from the construction in Section 6.1. The input ALDL$_f$ formula is that given as an example in Section 4.4: $\bbox{\alpha_3}\dbox{\alpha_2}\bbox{\alpha_1}\textbf{false}$. Following the construction provided in Section 6.1 results in the 2AFW $A_\varphi = (Q, \Sigma, \Delta, r, F)$, where

\begin{itemize}
\item $Q = \{$
    $\bbox{\alpha_3}\dbox{\alpha_2}\bbox{\alpha_1}\textbf{false}$,
    $\bbox{\alpha_{3_{s_1}}}\dbox{\alpha_2}\bbox{\alpha_1}\textbf{false}$,
    $\dbox{\alpha_2}\bbox{\alpha_1}\textbf{false}$,
    $\dbox{\alpha_{2_{q_1}}}\bbox{\alpha_1}\textbf{false}$,
    $\bbox{\alpha_1}\textbf{false}$,
    $\bbox{\alpha_{1_{p_1}}}\textbf{false}$,
    $a$,
    $\neg a$,
    $\textbf{false}\}$

    \item $\Sigma = \{ \{a\}, \emptyset\}$

    \item $r = \bbox{\alpha_3}\dbox{\alpha_2}\bbox{\alpha_1}\textbf{false}$

    \item $F = \{$
      $\bbox{\alpha_3}\dbox{\alpha_2}\bbox{\alpha_1}\textbf{false}$,
      $\bbox{\alpha_{3_{s_1}}}\dbox{\alpha_2}\bbox{\alpha_1}\textbf{false}$,
      $\bbox{\alpha_1}\textbf{false}$,
      $\bbox{\alpha_{1_{p_1}}}\textbf{false}$
      $\}$
\end{itemize}

and $\delta$ is defined with these transitions:

\begin{enumerate}[align=left]
\item $\delta(\bbox{\alpha_3}\dbox{\alpha_2}\bbox{\alpha_1}\textbf{false}, \{a\}) =
  (\bbox{\alpha_3}\dbox{\alpha_2}\bbox{\alpha_1}\textbf{false}, 1)  \wedge ((\neg a, 0) \vee \bbox{\alpha_{3_{s_1}}}\dbox{\alpha_2}\bbox{\alpha_1}\textbf{false}, 0)$

 \item $\delta(\bbox{\alpha_3}\dbox{\alpha_2}\bbox{\alpha_1}\textbf{false}, \emptyset) =
  (\bbox{\alpha_3}\dbox{\alpha_2}\bbox{\alpha_1}\textbf{false}, 1)  \wedge ((\neg a, 0) \vee \bbox{\alpha_{3_{s_1}}}\dbox{\alpha_2}\bbox{\alpha_1}\textbf{false}, 0)$

 \item $\delta(\bbox{\alpha_{3_{s_1}}}\dbox{\alpha_2}\bbox{\alpha_1}\textbf{false}, \{a\}) =
   (\dbox{\alpha_2}\bbox{\alpha_1}\textbf{false},0)$

 \item $\delta(\bbox{\alpha_{3_{s_1}}}\dbox{\alpha_2}\bbox{\alpha_1}\textbf{false}, \emptyset) =
   (\dbox{\alpha_2}\bbox{\alpha_1}\textbf{false},0)$

 \item $\delta(\dbox{\alpha_2}\bbox{\alpha_1}\textbf{false}, \{a\}) =
   ((\dbox{\alpha_2}\bbox{\alpha_1}\textbf{false}, -1) \vee (a, 0)) \wedge ((\dbox{\alpha_{2_{q_1}}}\bbox{\alpha_1}\textbf{false},0))$

 \item $\delta(\dbox{\alpha_2}\bbox{\alpha_1}\textbf{false}, \emptyset) =
   ((\dbox{\alpha_2}\bbox{\alpha_1}\textbf{false}, -1) \vee (a, 0)) \wedge ((\dbox{\alpha_{2_{q_1}}}\bbox{\alpha_1}\textbf{false},0))$

 \item $\delta(\dbox{\alpha_{2_{q_1}}}\bbox{\alpha_1}\textbf{false}, \{a\}) =
   (\bbox{\alpha_1}\textbf{false},0)$

 \item $\delta(\dbox{\alpha_{2_{q_1}}}\bbox{\alpha_1}\textbf{false}, \emptyset) =
   (\bbox{\alpha_1}\textbf{false},0)$

 \item $\delta(\bbox{\alpha_1}\textbf{false}, \{a\}) =
   (\bbox{\alpha_{1_{p_1}}}\textbf{false}, -1)$

 \item $\delta(\bbox{\alpha_1}\textbf{false}, \emptyset) =
   (\bbox{\alpha_{1_{p_1}}}\textbf{false}, -1)$

 \item $\delta(\bbox{\alpha_{1_{p_1}}}\textbf{false}, \{a\}) =
   (\textbf{false},0) $

 \item $\delta(\bbox{\alpha_{1_{p_1}}}\textbf{false}, \emptyset) =
   (\textbf{false},0) $

\end{enumerate}
\subsection{\texorpdfstring{Correctness of ALDL$_f$ to 2AFW Construction}{Correctness of ALDLf to 2AFW Construction}}

In this section we prove that when the above construction is applied to an ALDL$_f$
formula $\varphi$ to obtain 2AFW $A_\varphi$, $\mathcal{L}(\varphi) = \mathcal{L}(A_\varphi)$. 

In proving the equivalence of an arbitrary ALDL$_f$ formula $\varphi$ and its 2AFW $A_\varphi$ (Theorem 1 below),
it is easy to show equivalence when the ALDL$_f$ formula is propositional. It is not as straightforward for
ALDL$_f$ formulas of the form $\dbox{U}\varphi$ or $\bbox{U}\varphi$.

From the semantics of ALDL$_f$ we can see that two conditions must hold for it to be the case that
$\pi,i \models \dbox{U}\varphi$: (a) there is a $j$ such that $\pi,i,j \models U$ and (b) $\pi,j \models \varphi$.
We would like to have two corresponding properties for 2AFW runs that would allow us to show that $(\pi,i) \in \mathcal{L}(A_{\dbox{U}\varphi})$.
Condition (b) has an obvious analogue for runs of $A_{\dbox{U}\varphi}$, namely (b') that $(\pi,j) \in \mathcal{L}(A_{\varphi})$.
Condition (a) does not have an obvious analogue, so we establish a non-obvious one:
(a'): there is a run of $A_{\dbox{U}\varphi}$ on $\pi$ from $i$ that has a \textit{viable path} (Definition \ref{def:viable_path} below)
from the root to a node with configuration $(\varphi,j)$. Lemma \ref{lem:vpath_iff_vwalk} shows that (a) holds iff (a') holds. In
the proof of the main theorem, the inductive hypothesis gives us (b) and (b') when they are needed.

Intuitively, if $\pi,i,j \models U$, where $U = (R,T,\Delta,r,G)$ then there is a sequence
of transitions from $U$'s start state $r$ to some accepting state $g \in G$ that moves the
read position of $\pi$ from $i$ to $j$. We call this sequence a \textit{path automaton walk}.
We define it formally as follows:

\begin{definition}\label{def:path_automaton_walk}
    Let $U = (R,T,\Delta,r,G)$, $\pi \in (2^{AP})^+$. For all $r' \in R$ let $U_{r'} = (R,T,\Delta,r',G)$.
    A \textit{path automaton walk} of $U$ on $\pi$ from $i$ to $j$ is inductively defined as follows:
    \begin{enumerate}
        \item For all $g \in G$, for all $0 \leq j < |\pi|$, the empty sequence is a path automaton walk of $U_g$ on $\pi$ from $j$ to $j$.
        \item If there is some $(r,\zeta,r') \in \Delta$ such that for some $i$, $0 \leq i < |\pi|$, $\pi_i \models \zeta$ and there is a path automaton walk $w$
            of $U_{r'}$ on $\pi$ from $i+1$ to $j$, then $(i,\zeta,j) :: w$ is a path automaton walk of $U$ from $i$ to $j$.
        \item If there is some $(r,\zeta^-,r') \in \Delta$ such that for some $i$, $0 \leq i < |\pi|$, $\pi_i \models \zeta$ and there is a path automaton walk $w$
            of $U_{r'}$ on $\pi$ from $i-1$ to $j$, then $(i,\zeta^-,j) :: w$ is a path automaton walk of $U$ from $i$ to $j$.
        \item If there is some $(r,\psi \?,r') \in \Delta$ such that for some $i$, $0 \leq i < |\pi|$, $\pi,i \models \psi$ and there is a path automaton walk $w$
            of $U_{r'}$ on $\pi$ from $i$ to $j$, then $(i,\psi \? ,j) :: w$ is a path automaton walk of $U$ from $i$ to $j$.
    \end{enumerate}
    A path automaton walk $w$ is a \textit{viable path automaton walk} if, for all elements of the form $(i, \psi \? ,j)$
    occurring in $w$, there is an accepting run of $A_\psi$ on $\pi$ from $i$.

\end{definition}

\begin{lemma}\label{lem:walk_iff_relation}
    Let $U = (R,T,\Delta,r,G)$, $\pi \in (2^{AP})^+$, $0 \leq i < |\pi|$, $0 \leq j < |\pi|$. Then $\pi,i,j \models U$ if and only if
  there is a viable path automaton walk of $U$ on $\pi$ from $i$ to $j$.
\end{lemma}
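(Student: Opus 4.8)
The plan is to prove Lemma~\ref{lem:walk_iff_relation} by showing both directions of the biconditional, in each case inducting on the structure of the witness. The statement connects the ``flat'' existential characterization of $\pi,i,j \models U$ from Definition~\ref{def:aldlf_semantics}.2 (a finite sequence of state-position pairs) with the inductively-defined notion of viable path automaton walk from Definition~\ref{def:path_automaton_walk} (a list built by $::$). Since both objects are essentially the same sequence of transitions presented in two different formats, the core of the argument is a translation between the two presentations, with the extra twist that the ``viable'' qualifier and the $\models U$ relation both carry side conditions about tests, which must be matched up carefully.

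For the direction ($\Leftarrow$), I would induct on the structure of the viable path automaton walk $w$ of $U$ on $\pi$ from $i$ to $j$. In the base case $w$ is empty, so $U = U_g$ for some $g \in G$ and $i = j$; then the first disjunct of Definition~\ref{def:aldlf_semantics}.2 applies directly ($i = j$ and $r \in G$). In the inductive step $w = (i,\tau,j) :: w'$ where $\tau$ is $\zeta$, $\zeta^-$, or $\psi\?$; by the inductive hypothesis the tail $w'$ gives a witnessing sequence for $U_{r'}$ from the appropriate adjacent position ($i+1$, $i-1$, or $i$) to $j$, and prepending $(r,i)$ to that sequence yields a witnessing sequence for $U$ from $i$ to $j$, using clause 1, 2, or 3 of Definition~\ref{def:aldlf_semantics}.2 respectively. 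In the test case I must check that $\pi,i \models \psi$, which is exactly the premise of clause~4 of the walk definition, so the corresponding semantic clause is satisfied. (The viability condition --- that $A_\psi$ has an accepting run on $\pi$ from $i$ --- is not needed for this direction; it is a strictly stronger hypothesis than what $\models U$ requires, so $\models U$ follows from the weaker ``path automaton walk'' notion alone.)

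For the direction ($\Rightarrow$), I would start from a witnessing sequence $(r_0,k_0),\ldots,(r_n,k_n)$ as in Definition~\ref{def:aldlf_semantics}.2 and induct on its length $n$. If $n = 0$ (or more precisely if the first disjunct holds and $i = j$, $r \in G$), the empty sequence is a path automaton walk by clause~1. Otherwise the first step $(r_0,k_0) \to (r_1,k_1)$ is of one of the three types; I peel it off, apply the inductive hypothesis to the suffix $(r_1,k_1),\ldots,(r_n,k_n)$ (which witnesses $\pi,k_1,j \models U_{r_1}$), and cons the appropriate triple onto the front. To get \emph{viability} rather than just a bare walk, I must additionally show that for every test triple $(r_m,\psi\?,r_{m+1})$ used, $A_\psi$ has an accepting run on $\pi$ from $k_m$; but clause~3 of Definition~\ref{def:aldlf_semantics}.2 only tells us $\pi,k_m \models \psi$, not that $A_\psi$ accepts. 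This is the main obstacle: bridging ``$\pi,k_m \models \psi$'' and ``$A_\psi$ has an accepting run on $\pi$ from $k_m$'' requires appealing to the correctness of the ALDL$_f$-to-2AFW construction for the subformula $\psi$, i.e.\ essentially $\mathcal{L}(\psi) = \mathcal{L}(A_\psi)$ (Theorem~1) applied at the position $k_m$. Since $\psi$ is a strict subformula of $\dbox{U}\varphi$ or $\bbox{U}\varphi$ (it appears in a test inside $U$), in the overall induction on formula size this is available as an outer inductive hypothesis; so I would state Lemma~\ref{lem:walk_iff_relation} as being proved simultaneously with (or under the hypothesis of) the correctness theorem for all strictly smaller formulas, and invoke that to convert $\pi,k_m \models \psi$ into an accepting run of $A_\psi$. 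Modulo that dependency, both inductions are routine bookkeeping about lists and sequences.
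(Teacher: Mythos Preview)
The paper does not supply a proof for Lemma~\ref{lem:walk_iff_relation}; it is stated and then immediately used in the proof of Theorem~\ref{thm:aldlf_2afw_equiv}. So there is no paper proof to compare against directly, and your proposal stands on its own merits. The two inductions you outline are correct and are the natural way to establish the equivalence.

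Your most interesting observation is the dependency you flag in the $(\Rightarrow)$ direction: to produce a \emph{viable} walk from $\pi,i,j \models U$ you need to upgrade each $\pi,k_m \models \psi$ (which is all Definition~\ref{def:aldlf_semantics}.2 gives you) to an accepting run of $A_\psi$, and that is precisely the content of Theorem~\ref{thm:aldlf_2afw_equiv} for the strict subformula $\psi$. Your proposed fix---prove Lemma~\ref{lem:walk_iff_relation} under the outer inductive hypothesis of Theorem~\ref{thm:aldlf_2afw_equiv}---works, but it is worth noting that the paper itself seems to sidestep this entirely. If you look at how the lemma is actually invoked in the proof of Theorem~\ref{thm:aldlf_2afw_equiv} (Case~3, both directions), the paper only ever uses it to pass between $\pi,i,j \models U$ and a \emph{plain} path automaton walk: in the forward direction it obtains a walk from Lemma~\ref{lem:walk_iff_relation} and then separately upgrades to viable using the inductive hypothesis of Theorem~\ref{thm:aldlf_2afw_equiv}; in the backward direction it explicitly weakens a viable walk to a walk before invoking Lemma~\ref{lem:walk_iff_relation}. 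This strongly suggests that the word ``viable'' in the lemma statement is a slip, and that the intended lemma is the purely syntactic translation between the sequence presentation of Definition~\ref{def:aldlf_semantics}.2 and the inductive list presentation of Definition~\ref{def:path_automaton_walk}, with no reference to $A_\psi$ at all. That version is self-contained and has no circularity; your two inductions prove it without modification once you drop the viability clause.
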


We introduce the concept of a \textit{viable path} from the root $t$ of a run $\rho$
to some descendant $v$ of $\rho$ to describe a property of 2AFW runs that is the analogue
of the relation $\pi,i,j \models U$ for ALDL$_f$ formulas. Formally, we have the following:

\begin{definition}\label{def:viable_path}
  Let $\varphi$ be an ALDL$_f$ formula, let $A_\varphi$ be the 2AFW that is the result of applying
  the above construction to $\varphi$, and let $\pi \in (2^{AP(\varphi)})^*$. Let $\rho$ be a run of
  $A_\varphi$ on $\pi$, and let $t$ be the root of $\rho$. Let $h$
  be a path from $t$ to $v$. Then $h$ is a $\textit{viable path}$ if, for all nodes $u$ in $\rho$,
  if $u$ shares a parent with some node $v$ of $\rho$, $u \neq v$, that is the start or destination of an edge in $h$ then $u$ is the
  root of an accepting subtree.
\end{definition}

The following lemma formalizes the notion that the viable path is to runs what the relation
$\pi,i,j \models U$ is to the semantics of ALDL$_f$:

\begin{lemma}\label{lem:vpath_iff_vwalk}
    Let $\varphi = \dbox{U}\varphi'$, where $U = (R,T,\Delta,r,G)$ and $\varphi'$ is
    an ALDL$_f$ formula. Let $\pi \in (2^{AP(\varphi)})^+$, $0 \leq i < |\pi|$, $0 \leq j < |\pi|$.
    Then there is some run $\rho$ of $A_{\varphi}$ on $\pi$ from $i$ such that there is a viable path
    from the root $t$ of $\rho$ to some node $v$ such that $\config(v) = (\varphi',j)$ if and only if
    there is a viable path automaton walk of $U$ on $\pi$ from $i$ to $j$.
\end{lemma}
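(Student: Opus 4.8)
The plan is to prove both directions of the biconditional by induction on the length of the viable path automaton walk (equivalently, on the structure of the recursive run), exploiting the close match between the three ways a path automaton walk can be extended (Definition~\ref{def:path_automaton_walk}, clauses 2--4: a forward transition, a backward transition, a test transition) and the three families of disjuncts in rule~5 of the transition function (Definition~\ref{def:aldlf_to_2afw}). The base case is the empty walk: an empty viable path automaton walk of $U$ on $\pi$ from $i$ to $j$ exists exactly when $i = j$ and the current start state is accepting, i.e.\ $r \in G$; on the run side this is exactly the situation in which rule~5 contributes the disjunct $(\varphi',0)$, so the viable path of length one from $t = (\dbox{U}\varphi',i)$ to its child $v = (\varphi',i)$ exists iff $r \in G$. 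One must check the side condition of Definition~\ref{def:viable_path} holds vacuously here (the single child $v$ has no siblings, or any siblings must be roots of accepting subtrees, which can be arranged since all the other disjuncts are optional in a disjunction).

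For the inductive step, consider a nonempty viable walk $(i,\zeta,j)\mathbin{::}w$ with $(r,\zeta,r')\in\Delta$, $\pi_i\models\zeta$, and $w$ a viable walk of $U_{r'}$ on $\pi$ from $i+1$ to $j$. In rule~5, $\delta(\dbox{U}\varphi',\pi_i)$ contains the disjunct $(\dbox{U_{r'}}\varphi',1)$ precisely because $\pi_i\models\zeta$; choosing the singleton model $M = \{(\dbox{U_{r'}}\varphi',1)\}$ for this disjunction gives a run whose root $t$ has a single child $t'$ with $\config(t') = (\dbox{U_{r'}}\varphi',i+1)$, and $t'$ is (by construction) the root of a run of $A_{\dbox{U_{r'}}\varphi'}$ on $\pi$ from $i+1$. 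Applying the induction hypothesis to $U_{r'}$, $i+1$, $j$, and the walk $w$ yields a run from $t'$ with a viable path to a node $v$ with $\config(v) = (\varphi',j)$; prepending the edge $t \to t'$ gives the desired viable path in a run of $A_{\dbox{U}\varphi'}$ from $i$. The backward-transition case ($k_{m+1} = k_m - 1$, disjunct $(\dbox{U_{r'}}\varphi',-1)$, using $\sderives{(p,-1)}$ which requires $i > 0$) is symmetric. The test case is the one place where the ``viable'' qualifier does real work: a walk element $(i,\psi\?,j)$ with $(r,\psi\?,r')\in\Delta$ corresponds to the disjunct $(\psi,0)\wedge(\dbox{U_{r'}}\varphi',0)$, a \emph{conjunction}, so the chosen model is $M = \{(\psi,0),(\dbox{U_{r'}}\varphi',0)\}$ and the root $t$ gets two children: one with configuration $(\psi,i)$ and one with configuration $(\dbox{U_{r'}}\varphi',i)$. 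The viable path follows the second child; the first child $(\psi,i)$ is a sibling of a node on the path, so Definition~\ref{def:viable_path} demands it be the root of an accepting subtree --- which is exactly guaranteed by viability of the walk, namely that there is an accepting run of $A_\psi$ on $\pi$ from $i$. Conversely, given a viable path in a run, the node hanging off the path at a test step must root an accepting subtree, which translates back to the viability requirement on the walk element.

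For the converse direction (viable path $\Rightarrow$ viable walk) one peels edges off the front of the viable path $h$ from $t$ to $v$: the first edge of $h$ leaves $t = (\dbox{U}\varphi',i)$ and, by inspection of rule~5, lands on a node of one of the forms $(\varphi',i)$, $(\dbox{U_{r'}}\varphi',i\pm1)$, or $(\dbox{U_{r'}}\varphi',i)$ paired with a sibling $(\psi,i)$ or $(\nnf(\neg\psi),i)$ --- these are the only disjuncts available. In the first case $v = t$'s child, $\config(v) = (\varphi',i)$ forces $j = i$, and the disjunct being present forces $r\in G$, so the empty walk works. Otherwise the tail of $h$ is a viable path in the subtree rooted at that child, which is a run of $A_{\dbox{U_{r'}}\varphi'}$ from the appropriate position, and the induction hypothesis supplies a viable walk of $U_{r'}$; prepend the matching triple. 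In the test case, viability of $h$ tells us the sibling $(\psi,i)$ roots an accepting subtree, i.e.\ $A_\psi$ accepts $\pi$ from $i$, which is precisely what the walk element $(i,\psi\?,j)$ needs to be viable.

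The main obstacle I anticipate is not any single case but the bookkeeping needed to make the correspondence between ``models $M$ of the transition formula'' and ``edges/siblings in the run'' precise --- in particular, ensuring in the walk-to-run direction that we may always choose the \emph{minimal} satisfying model of each disjunction so that the path edge has no spurious siblings (or only siblings that are trivially accepting subtrees, using that the other disjuncts of rule~5 can be dropped), and, conversely, handling the fact that a given viable path may pass through a node whose transition formula had a larger model, requiring us to argue the off-path siblings are accepting subtrees and hence harmless. A secondary subtlety is that runs here are over the padded or unpadded trace and the definition of run ``from position $k$'' must be tracked carefully so that the subtree rooted at a child really is a legitimate run of the smaller automaton from the shifted position; this is routine given Definition~\ref{def:run} but must be stated. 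Once the model-selection convention is fixed, each of the three inductive cases is a short unfolding of the relevant definition.
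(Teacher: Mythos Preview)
The paper does not actually supply a proof of this lemma: it is stated and then immediately used inside the proof of Theorem~\ref{thm:aldlf_2afw_equiv}. So there is nothing to compare your proposal against; I can only assess it on its own terms.

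Your overall architecture is the right one and would be the natural proof the authors omitted: induct on the length of the walk in one direction and on the length of the path in the other, matching the three extension clauses of Definition~\ref{def:path_automaton_walk} against the three disjunct families of rule~5 in Definition~\ref{def:aldlf_to_2afw}, with the base case handled by the $(\varphi',0)$ disjunct that appears exactly when $r\in G$. Your handling of the test case---that the conjunction forces a sibling $(\psi,i)$ whose accepting subtree is exactly the viability datum---is the heart of the argument and you have it right. One small slip: in the diamond case the sibling is always $(\psi,i)$; the $(\nnf(\neg\psi),i)$ form you mention belongs to rule~6 for $\bbox{U}\varphi'$, not rule~5.

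There is one subtlety you have not quite surfaced. In the run-to-walk direction, when you encounter a test step you correctly extract from the sibling's accepting subtree an accepting run of $A_\psi$ on $\pi$ from $i$---this is the \emph{viability} requirement on the walk element $(i,\psi\?,j)$. But Definition~\ref{def:path_automaton_walk}, clause~4, also requires $\pi,i\models\psi$ for the sequence to be a path automaton walk at all, and nothing in rule~5 or in the viable-path hypothesis gives you that directly: the disjunct $(\psi,0)\wedge(\dbox{U_{r'}}\varphi',0)$ is present for every $(r,\psi\?,r')\in\Delta$ regardless of whether $\psi$ holds. The only way to recover $\pi,i\models\psi$ from the accepting subtree is via the $(\leftarrow)$ direction of Theorem~\ref{thm:aldlf_2afw_equiv} applied to the structurally smaller formula $\psi$. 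In other words, Lemma~\ref{lem:vpath_iff_vwalk} is not really a standalone statement: it lives inside the structural induction of Theorem~\ref{thm:aldlf_2afw_equiv} and silently consumes that theorem's inductive hypothesis for tests. The paper's own use of the lemma in the proof of Theorem~\ref{thm:aldlf_2afw_equiv} is consistent with this reading (the inductive hypothesis for tests is invoked explicitly there), but you should make this dependency explicit in your write-up rather than treating the lemma as self-contained.
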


We are now ready to formally state the main theorem:

\begin{theorem}\label{thm:aldlf_2afw_equiv}
  Let $\varphi$
  be an ALDL$_f$ formula in negation normal form and let $A_\varphi$ be the two-way alternating
  automaton resulting from the previous construction. Let $\pi \in (2^{AP})^+$. Then $\pi,i
  \models \varphi$ if and only if there is an accepting run of $A_\varphi$ on $\pi$ from $i$. 
\end{theorem}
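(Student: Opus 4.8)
The natural approach is structural induction on $\varphi$ (equivalently, induction on $|\varphi|$), strengthening the statement only mildly to handle all states in $CL(\varphi)$ simultaneously: for every $\psi \in CL(\varphi)$ and every $\pi \in (2^{AP})^+$ and every valid position $i$, we have $\pi,i \models \psi$ iff there is an accepting run of $A_\psi$ on $\pi$ from $i$. Since $CL(\varphi)$ is closed under the relevant subformula operations and the transition function of $A_\varphi$ is defined uniformly, this is essentially what must be proved anyway. I would carry out the cases following the rule numbering in Definition~\ref{def:aldlf_to_2afw}.

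\textbf{Propositional base and Boolean cases.} For $\psi = P$ and $\psi = \neg P$, the transition $\delta(\psi,\pi_i)$ is \textbf{true} or \textbf{false} depending on whether $P \in \pi_i$, so the (unique minimal) run is a single node followed by an $\textit{Accept}$ leaf exactly when $\pi,i \models \psi$; this matches rules A1 and A2. For $\psi = \psi_1 \wedge \psi_2$ the transition forces a run whose root has two children with configurations $(\psi_1,i)$ and $(\psi_2,i)$, so the root's subtree is accepting iff both the $A_{\psi_1}$-subtree and the $A_{\psi_2}$-subtree from $i$ are accepting; by the induction hypothesis this is iff $\pi,i\models\psi_1$ and $\pi,i\models\psi_2$, matching A3. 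The $\vee$ case is dual, using that a run nondeterministically picks one disjunct; this matches A4. These cases are routine.

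\textbf{The modal cases.} This is where the real work lies, and it is exactly where the machinery of Lemmas~\ref{lem:walk_iff_relation} and~\ref{lem:vpath_iff_vwalk} is deployed. For $\psi = \dbox{U}\varphi'$: if $\pi,i\models\dbox{U}\varphi'$ then by semantics A5 there is a $j$ with $\pi,i,j\models U$ and $\pi,j\models\varphi'$; by Lemma~\ref{lem:walk_iff_relation} there is a viable path automaton walk of $U$ from $i$ to $j$, and by Lemma~\ref{lem:vpath_iff_vwalk} there is a run $\rho$ of $A_{\dbox{U}\varphi'}$ from $i$ with a viable path from the root to a node $v$ with $\config(v)=(\varphi',j)$. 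Using the induction hypothesis on $\varphi'$ (which lies in $CL(\varphi)$), $\pi,j\models\varphi'$ gives an accepting run of $A_{\varphi'}$ from $j$; I then need to argue that the subtree at $v$ can be replaced by this accepting run, and that every node hanging off the viable path is already the root of an accepting subtree (this is the definition of viable path), together with the observation that the backbone itself is a finite branch ending at $v$ — hence the whole modified run is accepting. The converse runs the same equivalences backwards: an accepting run of $A_{\dbox{U}\varphi'}$ from $i$ cannot have an infinite backbone of $\dbox{U_{r'}}\varphi'$-states (no such state is in $F$, so that branch would be rejecting), so the backbone must terminate at a node of the form $(\varphi',j)$ or $(\psi,i)$ for a test; extracting this finite backbone yields a viable path, hence via the two lemmas a witness $j$ with $\pi,i,j\models U$, and the induction hypothesis gives $\pi,j\models\varphi'$. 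The case $\psi=\bbox{U}\varphi'$ is the dual: a run of $A_{\bbox{U}\varphi'}$ is a single tree (no top-level nondeterminism modulo propositional choices), its branches correspond to the possible path-automaton walks, a walk that loops forever produces an infinite branch all of whose labels are $\bbox{U_{r'}}\varphi'$-states — which are in $F$, hence accepting, matching the vacuous satisfaction of A6 — and a walk reaching $G$ at position $j$ spawns a $(\varphi',j)$-subtree which must be accepting, i.e. $\pi,j\models\varphi'$. One must check that the run is accepting iff for \emph{every} $j$ with $\pi,i,j\models U$ we have $\pi,j\models\varphi'$, again using the dual form of Lemma~\ref{lem:walk_iff_relation}.

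\textbf{Main obstacle.} The delicate point is the bookkeeping in the $\dbox{U}\varphi'$ case: transferring between "a run exists with a viable path to $(\varphi',j)$" and "an accepting run exists." The construction of the accepting run requires splicing an accepting $A_{\varphi'}$-subtree onto the end of the viable path while simultaneously ensuring all side-branches are accepting, and one must be careful that the nondeterministic choices made along the viable path (which disjunct of $\delta(\dbox{U_{r'}}\varphi',\pi_k)$ to follow) are globally consistent and that two-way moves do not drive the position out of bounds — this is handled by the position bound $0\le k<|\pi|$ baked into path automaton walks and into $\sderives{}$. I would lean entirely on Lemmas~\ref{lem:walk_iff_relation} and~\ref{lem:vpath_iff_vwalk} to encapsulate this, so the theorem's proof itself reduces to: (i) the routine propositional cases, and (ii) chaining the two lemmas with the induction hypothesis in the two modal cases, plus the observation about which infinite branches are accepting (driven by the choice $F=\{\bbox{U}\psi\text{-formulas}\}$). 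The $\bbox{}$ case should then follow from the $\dbox{}$ case by the duality $\bbox{U}\varphi'\equiv\neg\dbox{U}\neg\varphi'$ combined with a complementation argument on runs, or be proved directly in parallel; I would likely do it directly to avoid needing a run-complementation lemma for 2AFW.
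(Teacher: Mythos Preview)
Your proposal is correct and follows essentially the same approach as the paper: structural induction on $\varphi$, with the propositional and Boolean cases handled routinely, and the modal cases reduced to Lemmas~\ref{lem:walk_iff_relation} and~\ref{lem:vpath_iff_vwalk} together with the induction hypothesis, using the choice of $F$ to classify infinite branches correctly. Your explicit strengthening of the induction hypothesis to all $\psi\in CL(\varphi)$ is a minor organizational improvement over the paper, which invokes ``the inductive hypothesis'' on tests $\psi$ inside path automata without spelling out why those fall under the induction; otherwise the structure, the use of the two lemmas, the splicing argument for $\dbox{U}\varphi'$, and the treatment of $\bbox{U}\varphi'$ (which the paper dismisses as \emph{mutatis mutandis}) are the same.
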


\begin{proof} (Theorem \ref{thm:aldlf_2afw_equiv})

    ($\rightarrow$):

    We assume $\pi,i \models \varphi$ and show that there is an accepting run of $A_\varphi$ on $\pi$ from $i$.
    We proceed by structural induction on $\varphi$.

    \textbf{Base case:}

    \textit{Case (1):} $\varphi = P$, where $P \in AP$.

    From $\pi,i \models P$ and rule 1 of Definition \ref{def:aldlf_semantics} we have that $P \in \pi_i$.  Let $\rho$ be
    a run of $A_P$ on $\pi$ from $i$. The root $t$ of $\rho$ has configuration $(P,i)$. By rule 1 of Definition \ref{def:aldlf_to_2afw}
    we have that $\delta(P,\pi_i) = \textbf{true}$, so $t$ has one child $c$, where $\config(c) = Accept$. Now, $\rho$ is an accepting
    run because all of its branches are accepting.

    \textit{Case (2):} $\varphi = \neg P$, where $P \in AP$.

    From $\pi,i \models \neg P$ and rule 2 of Definition \ref{def:aldlf_semantics} we have that $\pi,i \not\models P$. By rule
    1 of Definition \ref{def:aldlf_semantics} we have that $P \not\in \pi_i$. Let $\rho$ be a run of $A_P$ on $\pi$ from $i$ and let
    $t$ be the root of $\rho$. Now, $\config(t) = (\neg P,i)$. By rule 2 of Definition \ref{def:aldlf_to_2afw}
    we have that $\delta(\neg P,\pi_i) = \textbf{true}$, so $t$ has one child $c$, where $\config(c) = Accept$. Now, $\rho$ is an
    accepting run because all of its branches are accepting.

    \textbf{Inductive step:}

    \textit{Case (1):} $\varphi = \varphi_1 \wedge \varphi_2$.

    From rule 3 of Definition \ref{def:aldlf_semantics} we have that
    $\pi,i \models \varphi_1$ and $\pi,i \models \varphi_2$.
    By the inductive hypothesis we have that there are accepting runs $\rho_{\varphi_1}$ and $\rho_{\varphi_2}$ of $A_{\varphi_1}$
    on $\pi$ from $i$
    and $A_{\varphi_2}$ on $\pi$ from $i$. Let $t_1$ and $t_2$ be the roots of $\rho_{\varphi_1}$ and $\rho_{\varphi_2}$, respectively.
    Note that $\config(t_1) = (\varphi_1,i)$ and $\config(t_2) = (\varphi_2,i)$.

    Let $\theta = \delta(\varphi_1 \wedge \varphi_2,\pi_i)$. By rule 3 of Definition \ref{def:aldlf_to_2afw},
    $\theta = (\varphi_1,0) \wedge (\varphi_2,0)$. So we have that
    $\{(\varphi_1,0), (\varphi_2,0)\} \in \mymodels(\theta)$.
    Moreover, we have that $(\varphi_1 \wedge \varphi_2,i) \sderives{\varphi_1,0} (\varphi_1,i)$
    and $(\varphi_1 \wedge \varphi_2,i) \sderives{\varphi_2,0} (\varphi_2,i)$.

    Let $\rho$ be a configuration-labeled tree with root $t$, where $\config(t) = (\varphi_1 \wedge \varphi_2,i)$
    and $t$ has two children, $t_1$ and $t_2$. Now, $\rho$ is a run of $A_{\varphi_1 \wedge \varphi_2}$ on $\pi$ from $i$.
    Moreover, because $\rho_{\varphi_1}$ and $\rho_{\varphi_2}$ are accepting runs it follows that $\rho$ is an accepting run.

    \textit{Case (2):} $\varphi = \varphi_1 \vee \varphi_2$.

    From rule 4 of Definition \ref{def:aldlf_semantics} we have that at least one of the following is true: (a) $\pi,i \models \varphi_1$
    or (b) $\pi,i \models \varphi_2$. Without loss of generality we assume (a). By the inductive hypothesis we have that there is
    an accepting run $\rho_{\varphi_1}$ of $A_{\varphi_1}$ on $\pi$ from $i$. Let $t_1$ be the root of $\rho_{\varphi_1}$. Note that
    $\config(t_1) = (\varphi_1,i)$.

    Let $\theta = \delta(\varphi_1 \vee \varphi_2,\pi_i)$. By rule 4 of Definition \ref{def:aldlf_to_2afw}
    $\theta = (\varphi_1,0) \vee (\varphi_2,0)$.
    So we have that $\{(\varphi_1,0 \} \in \mymodels(\theta)$. Moreover, we have that
    $(\varphi_1 \vee \varphi_2,i) \sderives{\varphi_1,0} (\varphi_1,i)$.

    Let $\rho$ be a configuration-labeled tree with root $t$, where $\config(t) = (\varphi_1 \vee \varphi_2,i)$
    and $t$ has one child, $t_1$. Now, $\rho$ is a run of $A_{\varphi_1 \vee \varphi_2}$ on $\pi$ from $i$.
    Moreover, because $\rho_{\varphi_1}$ is an accepting run it follows that $\rho$ is an accepting run.

    \textit{Case (3):} $\varphi = \dbox{U}\varphi'$, where $U = (R,T,\Delta,r,G)$.

    From $\pi,i \models \dbox{U}\varphi'$ we have that there is some $j$ such that $0 \leq j < |\pi|$,
    $\pi,i,j \models U$, and $\pi,j \models \varphi'$. By the inductive hypothesis we have that there is an accepting run
    $\rho_{\varphi'}$ of $A_{\varphi'}$ on $\pi$ from $j$.
    By Lemma \ref{lem:walk_iff_relation} we have that there is a path automaton walk $w$ of $U$ on
    $\pi$ from $i$ to $j$. From Definition \ref{def:path_automaton_walk}, it is the case that for every $(k, \psi \? ,l)$ in $w$ we have that $\pi,k \models \psi$.
    From the inductive hypothesis we have that for every $(k, \psi \? ,l)$ in $w$ there is an accepting run $\rho_{\psi,k}$ of $A_\psi$ on
    $\pi$ from $k$. So by Definition \ref{def:path_automaton_walk}, $w$ is a viable path automaton walk of $U$ on $\pi$ from $i$ to $j$.
    By Lemma \ref{lem:vpath_iff_vwalk}
    there is some run $\rho$ of $A_{\dbox{U}\varphi'}$ on $\pi$ from $i$ with a viable path from the root $t$ of $\rho$ to a node
    $v$ with $\config(\varphi',j)$. Let $\rho'$ be identical to $\rho$ except that the subtree with $v$ as root is replaced by
    $\rho_{\varphi'}$. Now, $\rho_{\varphi'}$ is an accepting subtree, and since there is a viable path from $t$ to $\rho_{\varphi'}$
    it is the case that every other subtree in $\rho'$ is accepting. So $\rho'$ is an accepting run of $A_{\dbox{U}\varphi'}$ on $\pi$ from $i$.

    \textit{Case (4):} $\varphi = \bbox{U}\varphi'$, where $U = (R,T,\Delta,r,G)$.

    The reasoning from Case (3) applies here, mutatis mutandis.

    ($\leftarrow$):

    We assume there is an accepting run $\rho$ of $A_\varphi$ on $\pi$ from $i$ and show that $\pi,i \models \varphi$.
    We proceed by structural induction on $\varphi$.

    \textbf{Base case:}

    \textit{Case (1):} $\varphi = P$, where $P \in AP$.

    Let $t$ be the root of $\rho$. Now, $\config(t) = (P, i)$. Let $\theta = \delta(P,\pi_i)$. From rule 1 of
    Definition \ref{def:aldlf_to_2afw} we have that either (a) $\theta = \textbf{true}$ or (b) $\theta = \textbf{false}$.
    But if (b) holds then $t$ has no children and $\rho$ is a rejecting run\----a contradiction. So (a) must hold, so
    $\pi_i \models P$, so $\pi,i \models P$.

    \textit{Case (2):} $\varphi = \neg P$, where $P \in AP$.

    Let $t$ be the root of $\rho$. Now, $\config(t) = (\neg P,i)$. Let $\theta = \delta(\neg P,\pi_i)$.
    From rule 2 of Definition \ref{def:aldlf_to_2afw} we have that either (a) $\theta = \textbf{true}$ or
    (b) $\theta = \textbf{false}$. But if (b) holds then $t$ has no children, so $\rho$ is a rejecting run\----a contradiction.
    So (a) must hold, so $\pi_i \not\models P$, so $\pi,i \not\models P$, so $\pi,i \models \neg P$.

    \textbf{Inductive step:}

    \textit{Case (1):} $\varphi = \varphi_1 \wedge \varphi_2$.

    Let $t$ be the root of $\rho$. Now, $\config(t) = (\varphi_1 \wedge \varphi_2,i)$.
    Let $\theta = \delta(\varphi_1 \wedge \varphi_2,\pi_i) = (\varphi_1,0) \wedge (\varphi_2,0)$.
    It is easy to see that the only member of $\mymodels(\theta)$ is $\{(\varphi_1,0),(\varphi_2,0)\}$.
    Moreover, $(\varphi_1 \wedge \varphi_2,i) \sderives{(\varphi_1,0)} (\varphi_1,i)$ and
    $(\varphi_1 \wedge \varphi_2,i) \sderives{(\varphi_2,0)} (\varphi_2,i)$. So $t$ must have children
    $c_1$ and $c_2$ such that $\config(c_1) = (\varphi_1,i)$ and $\config(c_2) = (\varphi_2,i)$. Let
    $\rho_{\varphi_1}$ be the subtree with $c_1$ as root and $\rho_{\varphi_2}$ be the subtree with $c_2$
    as root. Now, $\rho_{\varphi_1}$ is a run of $A_{\varphi_1}$ on $\pi$ from $i$ and $\rho_{\varphi_2}$ is
    a run of $A_{\varphi_2}$ on $\pi$ from $i$. Because $\rho$ is an accepting run, all of its subtrees
    must be accepting runs. In particular, $\rho_1$ and $\rho_2$ are accepting runs. By the inductive hypothesis
    we have that $\pi,i \models \varphi_1$ and $\pi,i \models \varphi_2$. But by rule 3 of Definition \ref{def:aldlf_semantics}
    we have that $\pi,i \models \varphi_1 \wedge \varphi_2$.

    \textit{Case (2):} $\varphi = \varphi_1 \vee \varphi_2$.

    Let $t$ be the root of $\rho$. Now, $\config(t) = (\varphi_1 \vee \varphi_2,i)$.
    Let $\theta = \delta(\varphi_1 \vee \varphi_2,\pi_i) = (\varphi_1,0) \vee (\varphi_2,0)$. It is easy to
    see that $\mymodels(\theta) = \{ \{ (\varphi_1,0),(\varphi_2,0)  \}, \{ (\varphi_1,0)  \}, \{ (\varphi_2,0) \} \}$.
    Moreover $(\varphi_1 \vee \varphi_2,i) \sderives{(\varphi_1,0)} (\varphi_1,i)$
    and $(\varphi_1 \vee \varphi_2,i) \sderives{(\varphi_2,0)} (\varphi_2,i)$.

    So either (a) $t$ has exactly two children, $c_1$ and $c_2$, where $\config(c_1) = (\varphi_1,i)$ and
    $\config(c_2) = (\varphi_2,i)$, (b) $t$ has exactly one child, $c_1$, where $\config(c_1) = (\varphi_1,i)$,
    or (c) $t$ has exactly one child, $c_2$, where $\config(c_2) = (\varphi_2,i)$.

    Suppose (a). Let $\rho_{\varphi_1}$ be the subtree with $c_1$ as root and $\rho_{\varphi_2}$ be the subtree
    with $c_2$ as root. Now, $\rho_{\varphi_1}$ is a run of $A_{\varphi_1}$ on $\pi$ from $i$ and $\rho_{\varphi_2}$
    is a run of $A_{\varphi_2}$ on $\pi$ from $i$. Because $\rho$ is an accepting run, all of its subtrees must
    be accepting runs. In particular, $\rho_{\varphi_1}$ and $\rho_{\varphi_2}$ are accepting runs. By the inductive hypothesis
    we have that $\pi,i \models \varphi_1$ and $\pi,i \models \varphi_2$. But by rule 4 of Definition \ref{def:aldlf_semantics}
    we have that $\pi,i \models \varphi_1 \vee \varphi_2$.

    Suppose (b). Let $\rho_{\varphi_1}$ be the subtree with $c_1$ as root. Now, $\rho_{\varphi_1}$ is a run of $A_{\varphi_1}$
    on $\pi$ from $i$. Because $\rho$ is an accepting run, all of its subtrees must be accepting runs. In particular,
    $\rho_{\varphi_1}$ is an accepting run. By the inductive hypothesis we have that $\pi,i \models \varphi_1$. But by rule 4
    of Definition \ref{def:aldlf_semantics} we have that $\pi,i \models \varphi_1 \vee \varphi_2$.

    Suppose (c). Let $\rho_{\varphi_2}$ be the subtree with $c_2$ as root. Now, $\rho_{\varphi_2}$ is a run of $A_{\varphi_2}$ on
    $\pi$ from $i$. Because $\rho$ is an accepting run, all of its subtrees must be accepting runs. In particular, $\rho_{\varphi_2}$
    is an accepting run. By the inductive hypothesis we have that $\pi,i \models \varphi_2$. But by rule 4 of
    Definition \ref{def:aldlf_semantics} we have that $\pi,i \models \varphi_1 \vee \varphi_2$.

    \textit{Case (3):} $\varphi = \dbox{U}\varphi'$, where $U = (R,T,\Delta,r,G)$.

    From rule 5 of Definition \ref{def:aldlf_to_2afw} we can see that every node with configuration of the form $(\dbox{U}\varphi',n)$
    must either (1) have a descendant with configuration of the form $(\dbox{U_{r'}}\varphi',n')$, where $r' \in R$, (2) have a descendant with
    configuration of the form $(\varphi',n')$, or (3) have no descendants (this can occur if $r \not\in G$ and $U$ cannot make
    any transitions, rendering all of the disjunctions empty). It cannot be the case that (3) holds because if $\rho$ has a finite
    branch with a leaf that does not have configuration $Accept$ then $\rho$ is a rejecting run---a contradiction. Let $t$ be the root
    of $\rho$. If (1) holds for every descendant of $t$ with configuration of the form $(\dbox{U_{r'}}\varphi',n')$ then
    $\rho$ has an infinite branch that has no nodes with state-label of the form $\bbox{\alpha}\psi$, so $\rho$ is a rejecting run---a contradiction.
    So (2) must hold for some descendant of $t$. So there is a path $h$ from $t$ to some node $v$,
    where $\config(v) = (\varphi',j)$ and $0 \leq j < |\pi|$.

    Since $\rho$ is an accepting run all of its branches are accepting, so $h$ is a viable path from $t$ to $v$.
    By Lemma \ref{lem:vpath_iff_vwalk} there is a viable path automaton walk $w$ of $U$ on $\pi$ from $i$ to $j$. But since $w$ is a viable
    path automaton walk, $w$ is a path automaton walk, so by Lemma \ref{lem:walk_iff_relation} we have that $\pi,i,j \models U$.
    Moreover, since $\rho$ contains an accepting subtree with configuration $(\varphi',j)$ we have that $\pi,j \models \varphi'$.
    But since $\pi,i,j \models U$ and $\pi,j \models \varphi'$, we have that $\pi,i \models \dbox{U}\varphi'$.

    \textit{Case (4):} $\varphi = \bbox{U}\varphi'$, where $U = (R,T,\Delta,r,G)$.

    The reasoning from Case (3) applies here, mutatis mutandis.
\end{proof}

\section{From 2AFWs to NFWs}
\label{ch:2afwtonfw}

Now we proceed to the final step of the ALDL$_f$ to NFA translation: the construction of an NFA from a 2AFW. The technique we use here is a generalization of that given by Chandra, Kozen, and Stockmeyer to construct an NFA from a one-way alternating finite automaton \cite{chandra}.

As one step in this process, in Theorem 2 we model the acceptance of a word $w$ by a 2AFW $A$ as a two-player game in which a Protagonist tries to show that $A$ accepts $w$ and an Antagonist tries to show that it does not. Before starting on Theorem 2 it is necessary to introduce some basic game-related definitions regarding strategies. In general, a strategy is a function that takes as input the state of a game and returns the next move that should be made. In Section 7.1 we define the \textit{history strategy} for 2AFWs, which requires as input the entire history of a game in order to prescribe the next move. In Section 7.2 we show that the full history is not needed and one only needs to use the current configuration of the game to prescribe the next move and so we define the \textit{strategy word} for 2AFWs, which only takes into account the current configuration of the game in order to prescribe moves.

\subsection{Definitions and Lemmas}

\subsubsection{History Strategy}
Let $A = (Q,\Sigma,\delta,q_0,F)$ be a 2AFW. A \textit{history} for $A$ is an element of
$(Q \times \mathbb{N})^*$. Intuitively, a history is a sequence of configurations of $A$ that
represent a (possibly partial) branch of a run $\rho$ of $A$ on some word $w$.
A \textit{history strategy} is a mapping
$h : (Q \times \mathbb{N})^* \rightarrow 2^{\{-1,0,1 \} \times Q  }$.
Intuitively, a history strategy takes as input the current history of a game and prescribes the next move.

A history strategy $h$ is \textit{on} an input word $w$ if the following conditions apply:

\begin{enumerate}
    \item $h(\epsilon) = (q_0,0)$
    \item If $p$ is a history and the last element of $p$ has a configuration with index 0, then $h(p)$ has no elements of the form $(-1,q')$
    \item If $p$ is a history and the last element of $p$ has a configuration with index $|w| - 1$, then $h(p)$ has no elements of the form $(1,q')$
    \item If $p$ is a history and the last element of $p$ has a configuration of the form $(q,u)$, then $\delta(q,w_u)$ is satisfied by $h(p)$, where $w_u$ is the element of $w$ at index $u$.
\end{enumerate}

The first condition ensures that the strategy has a prescription for the start of the game, before any moves have been made.
The next two conditions ensure that the strategy does not prescribe a move that would lead to an invalid index,
while the final condition ensures that each transition satisfies $A$'s transition function.

A \textit{history path} $\beta$ in a history strategy $h$ on input $w$ is a maximal length sequence of pairs from $|w| \times Q$
that may be finite or infinite and has the following property: if $\beta = (u_1,q_1), (u_2,q_2), \ldots$ is infinite, then,
for all $i \geq 0$, there is some $d_{i+1} \in \{-1,0,1 \}$ such that $(d_{i+1},q_{i+1}) \in h((u_1,q_1),\ldots,(u_i,q_i)$
(if $\beta$ is finite, this condition holds for all $0 \leq i < |\beta|$).

Thus, $\beta$ is obtained by following transitions in the history strategy.
We define $\inf(\beta)$ to be the set of states in $Q$ that occur infinitely often in $\beta$.
We say that an infinite history path $\beta$ is accepting if there is a $q \in F$ such that $q \in \inf(\beta)$.
We say that a finite history path $\beta$ of length $n$ is accepting if $\delta(q_n, w_{u_n}) = \textbf{true}$. We say that a history strategy
$h$ is accepting if all of its history paths are accepting.

\subsubsection{Strategy Word}

Let $A = (Q,\Sigma,\delta,q_0,F)$ be a 2AFW and let $w$ be a word over $\Sigma$. A \textit{strategy word} for $A$ on $w$ is a mapping $\gamma : |w| \rightarrow 2^{Q \times \{-1,0,1\} \times Q}$.
For each label $l \subseteq Q \times \{-1,0,1\} \times Q$ we define $\origin(l) = \{ q \mid (q,d,q') \in l \}$. Moreover, $q_0 \in \origin(\gamma(0))$ and for each position $u$, $0 \leq u < |w|$, and each state $q \in \origin(\gamma(u))$ the set $\{ (d,q') \mid (q,d,q') \in \gamma(u) \}$ satisfies $\delta(q, w_u)$. Also, $\gamma(0)$ must not contain elements of the form $(q,-1,q')$ and $\gamma(|w|-1)$ must not contain elements of the form $(q,1,q')$; these conditions ensure that the strategy word does not attempt to move the read position of the input to a location before the beginning of $w$ or after the end of $w$.

Intuitively, a label defines a set of transitions; $q$ and $q'$ prescribe origin and destination states, while $d$ prescribes the direction to move the read head of the input word. Each label can be viewed as a strategy of satisfying the transition function. A strategy word is similar to a history strategy, but it makes prescriptions based only on the current configuration of the game, not the history of the game. Note that because the alphabet can express a transition occurring in one of three directions between any pair of states, the size of the alphabet of strategy words is $8^{|Q|^2}$.

A \textit{strategy path} $\beta$ in a strategy word $\gamma$ on an input word $w$ is a maximal sequence of pairs
from $|w| \times Q$ that may be infinite
or finite and has the following property: if $\beta = (u_1,q_1), (u_2,q_2),\ldots$ is infinite then, for all $i \geq 0$,
there is some $d_i \in \{-1,0,1\}$ such that $(q_i,d_i,q_{i+1}) \in \gamma(u_i)$ and $u_{i+1} = u_i + d_i$.
Thus, $\beta$ is obtained by following transitions in the strategy word. We define
$\inf(\beta)$ to be the set of states in $Q$ that occur infinitely often in $\beta$. We say that an infinite strategy path $\beta$ is
accepting if there is a $q \in F$ such that $q \in \inf(\beta)$.
If $\beta = (u_1,q_1),(u_2,q_2),\ldots,(u_n,q_n)$ is finite then, for all $0 \leq i < n$, there is
some $d_i \in \{-1,0,1\}$ such that $(q_i,d_i,q_{i+1}) \in \gamma(u_i)$ and $u_{i+1} = u_i + d_i$.
Moreover, $\delta(q_n,w_{u_n}) = \textbf{true}$; this condition ensures that strategy paths are maximal, i.e., they do not end while it is still possible
to make transitions. We say that a finite strategy path $\beta$
is accepting if $\delta(q_n,w_{u_n}) = \textbf{true}$. We say that the strategy word $\gamma$ on $w$ is accepting if all strategy paths in $\gamma$ are accepting.

\begin{lemma}
Let $\gamma$ be a strategy word on an input word $w$. Then if $\beta$ is a finite strategy path in $\gamma$, then $\beta$ is accepting.
\end{lemma}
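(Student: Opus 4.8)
The plan is to unwind the definition of a finite path in a strategy word and observe that the condition which qualifies a sequence as a (maximal) finite path is exactly the condition that was used to define ``accepting'' for finite paths. Concretely, suppose $\beta = (u_1,q_1),(u_2,q_2),\ldots,(u_n,q_n)$ is a finite path in the strategy word $\gamma$ on $w$. By the definition of a finite path in a strategy word, $\beta$ satisfies the maximality requirement, which is spelled out precisely as $\delta(q_n,w_{u_n}) = \textbf{true}$. But the definition of an accepting finite path is exactly $\delta(q_n,w_{u_n}) = \textbf{true}$. Hence $\beta$ is accepting, and since $\beta$ was an arbitrary finite path in $\gamma$, the claim follows. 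Under this reading there is nothing to prove beyond citing the two definitions.

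If instead one takes ``maximal'' as the primitive notion — a sequence that cannot be prolonged by a further transition permitted by $\gamma$ — and regards $\delta(q_n,w_{u_n}) = \textbf{true}$ as a derived fact, one extra observation is needed. Since $\gamma$ is on $w$, for every position $u$ with $0 \le u < |w|$ and every $q \in \origin(\gamma(u))$ the set $\{(d,q') : (q,d,q') \in \gamma(u)\}$ satisfies $\delta(q,w_u)$. I would first argue, by tracking how $\beta$ is generated from $\gamma$, that the terminal state $q_n$ lies in $\origin(\gamma(u_n))$, and then case-split on the value of $\delta(q_n,w_{u_n})$: it cannot be $\textbf{false}$, since no set satisfies $\textbf{false}$, contradicting that $\gamma$ is on $w$; and if it were a formula that genuinely requires some state--direction pair, the nonempty witnessing set would license an extension of $\beta$ — after checking the boundary cases $u_n = 0$ and $u_n = |w|-1$, which the constraints on $\gamma(0)$ and $\gamma(|w|-1)$ handle — contradicting maximality. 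What remains is $\delta(q_n,w_{u_n}) = \textbf{true}$, i.e.\ the accepting condition.

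The main obstacle, such as it is, is purely bookkeeping: deciding which of the two readings of ``finite path'' the surrounding definitions intend, and, in the second reading, verifying that the terminal state of a path is in fact an origin in the corresponding label and that an extension is blocked only when $\delta$ evaluates to $\textbf{true}$. In either case the lemma is a sanity check that the path-and-acceptance machinery for strategy words is coherent, and it contains no substantive difficulty.
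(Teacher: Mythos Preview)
Your proposal is correct and covers the paper's approach. The paper's proof follows your second reading: it uses that $\gamma$ is on $w$ to rule out $\delta(q_n,w_{u_n}) = \textbf{false}$ (since \textbf{false} has no satisfying set), and then invokes maximality to rule out a nontrivial formula (else $\beta$ would not end at $(u_n,q_n)$), leaving $\delta(q_n,w_{u_n}) = \textbf{true}$. Your first reading is in fact the cleaner one given the paper's own definition of a finite path, which explicitly stipulates $\delta(q_n,w_{u_n}) = \textbf{true}$ as part of the maximality clause; under that reading the lemma is immediate, and the paper's argument is redundant re-derivation of a condition already baked into the definition.
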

\begin{proof}
Because $\beta$ is a finite strategy path, it is of the form $(u_1,q_1),(u_2,q_2),\ldots,(u_n,q_n)$. From the definition of strategy path we have that $(q_i,d_i,q_i+1) \in \gamma(u_i)$ and $u_i+1 = u_i + d_i$ for all $0 < i \leq n$. Because $\gamma$ is on $w$, we have that, for all $0 < i \leq n$, $(d_i,q_{i+1})$ is an element of some set that satisfies $\delta(q_i,w_u)$. So $\delta(q_n,w_{u_n})$ cannot be \textbf{false}, because \textbf{false} has no satisfying set. Nor can $\delta(q_n,w_{u_n})$ be an arbitrary formula; it must be \textbf{true}, else it would not be the final element of $\beta$. So $\beta$ is an accepting strategy path.
\end{proof}

\subsubsection{Annotation}
Let $A = (Q,\Sigma,\delta,q_0,F)$ be a 2AFW, $w \in \Sigma^*$, and $\gamma : |w| \rightarrow 2^{Q \times \{-1,0,1\} \times Q}$ be a
strategy word for $A$ on $w$. For each state $q \in Q$, let $\final(q) = 1$ if $q \in F$, $0$ otherwise.

Informally, an annotation keeps track of relevant information regarding all finite strategy paths in $\gamma$. In particular,
for all $0 \leq u < |w|$, we want to know all the pairs of states $q,q'$ such that when $A$ is in state $q$ at
position $u$ it is possible for $A$ to engage in a series of transitions that leave it in state $q'$
after having returned to position $u$. Moreover, we want to know whether this series of transitions included a
step that involved transitioning to an accepting state.

Formally, an \textit{annotation}
of $\gamma$ is a mapping $\eta : |w| \rightarrow 2^{Q \times \{ 0,1 \} \times Q}$ that satisfies the
following closure conditions:

\begin{enumerate}
    \item If $(q,0,q') \in \gamma(u)$, then $(q,\final(q'),q') \in \eta(u)$

    \item If $u_1 = u_2 - 1$, $(q,1,q') \in \gamma(u_1)$, and $(q',-1,q'') \in \gamma(u_2)$, then $(q,\max(\final(q'),\final(q'')),\allowbreak q'') \in \eta(u_1)$

    \item If $u_1 = u_2 + 1$, $(q,-1,q') \in \gamma(u_1)$, and $(q',1,q'') \in \gamma(u_2)$, then $(q,\max(\final(q'),\final(q'')),\allowbreak q'') \in \eta(u_1)$

    \item If $(q, f_1, q') \in \eta(u)$ and $(q', f_2, q'') \in \eta(u)$, then $(q, \max(f_1,f_2),q'') \in \eta(u)$
\end{enumerate}
An annotation is \textit{accepting} if it has no elements of the form $(q,0,q)$. Note that because an annotation is a mapping to sets of transitions between two states that have potentially two directions, the alphabet of annotations is of size $4^{|Q|^2}$.

It is useful to know where in the input cycles can occur; we introduce a special word called a \emph{semi-path} to keep track of this. Informally, a semi-path adds positional information to the state-to-state cycle information from an annotation to indicate at which positions in the word cycles may occur. Correctness of the indexes is ensured by following transitions in the strategy word in a fashion similar to a strategy path. Formally, a \textit{semi-path} on a strategy word $\gamma$ and an annotation $\eta$ is a finite sequence $c_1,\ldots,c_m$, where for all $1 \leq i < m$ either $c_i \in |w| \times Q$ or $c_i \in |w| \times Q \times \{0,1\} \times Q$ such that the following conditions hold:

\begin{enumerate}
    \item If $c_i = (j,q_1,f,q_2)$ then $(q_1,f,q_2) \in \eta(j)$.

    \item If $c_i = (j,q)$ and $c_{i+1} = (j',q')$ then there is some $d \in \{-1,0,1\}$ such that $(q,d,q') \in \gamma(j)$ and $j' = j + d$.

    \item If $c_i = (j,q)$ and $c_{i+1} = (j',q_1,f,q_2)$ then there is some $d \in \{-1,0,1\}$ such that $(q,d,q_1) \in \gamma(j)$ and $j' = j+d$.

    \item If $c_i = (j,q_1,f,q_2)$ and $c_{i+1} = (j',q)$ then there is some $d \in \{-1,0,1\}$ such that $(q_2,d,q) \in \gamma(j)$ and $j' = j + d$.

    \item If $c_i = (j,q_1,f,q_2)$ and $c_{i+1} = (j',q_1',f',q_2')$ then there is some $d \in \{-1,0,1\}$ such that $(q_2,d,q_1') \in \gamma(j)$ and $j' = j + d$.

\end{enumerate}
For an element $c_i$ of $p$, we say that $\myindex(c_i)=j$ if $c_i=(j,q)$ or $c_i=(j,q,f,q')$. An element $c_i=(j,q,f,q')$ is called a \emph{cycle}.

Let $p$ be a semi-path of length $n$. An element $c_i$ of $p$ is \emph{accepting}, if either $c_i=(j,q,f,q')$ with $f=1$, or $c_i=(j,q)$ with $i>1$ and $q\in F$.
We say that $p$ is \emph{accepting} if it has an accepting element, that is, an element of the form $(j',q')$ with $q' \in F$ or an element of the form $(j',q_1,f,q_2)$ with $f = 1$. Finally, we say that $p$ is a $(j,q)$-\textit{semi-path} if either (1) $n>1$ and $p_0$ is either $(j,q)$ or $(j,q,f,q')$ and $p_{n-1}$ is either $(j,q)$ or $(j,q'',f,q)$, or (2) $n=1$ and $p_0 = (j,q,f,q)$. Note that a non-accepting $(j,q)$-semi-path of length 1 must be of the form $(j,q,f,q)$ with $f=0$.

\begin{lemma} (Shortening Lemma)\\
Let $A = (Q,\Sigma,\delta,q_0,F)$ be a 2AFW, $w \in \Sigma^+$, $\gamma$ be a strategy word for $A$ on $w$, and $\eta$ be an annotation of $\gamma$. Let $p$ be a  non-accepting $(j,q)$-semi-path on $\gamma$ and $\eta$ of length $n>1$. Then there is a non-accepting $(j,q)$-semi-path $c'$ on $\gamma$ and $\eta$ of length $n-1$.
\end{lemma}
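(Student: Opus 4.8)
The plan is to prove this by a single surgical reduction, not by induction: given the non-accepting $(j,q)$-semi-path $p = p_0,\dots,p_{n-1}$ with $n > 1$, I would locate two consecutive elements that can be \emph{fused} into one semi-path element by means of the closure conditions on the annotation $\eta$, obtaining a sequence of length $n-1$, and then check that this sequence is still a $(j,q)$-semi-path and still non-accepting. The bookkeeping that makes non-acceptance survive every fusion is uniform: each new element is a cycle element whose $f$-bit is the maximum of the $f$-bits and $\final$-values of the pieces it replaces, so a fusion can produce an accepting element only if $p$ already contained one; this, together with the fact that fusions leave all other elements and their states untouched, is the routine part of the argument.

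\textbf{Main case: some step keeps the position fixed.} Suppose $p_i \to p_{i+1}$ has direction $0$, so $p_i$ and $p_{i+1}$ share a position $u$. Whichever of semi-path conditions 2--5 applies, the $\gamma$-edge that justifies this step joins the exit state of $p_i$ to the entry state of $p_{i+1}$ and lies in $\gamma(u)$ with direction $0$. Closure condition 1 of the annotation turns that $\gamma$-edge into a cycle in $\eta(u)$, and closure condition 4 then composes it with whichever of $p_i,p_{i+1}$ is itself a cycle element; either way $p_i,p_{i+1}$ may be replaced by a single cycle element of $\eta(u)$ whose entry state is that of $p_i$ and whose exit state is that of $p_{i+1}$. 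Since the semi-path conditions linking $p_{i-1}$ to this new element, and this element to $p_{i+2}$, refer only to those entry/exit states and to position $u$, all of conditions 1--5 still hold; and if $i=0$ or $i+1 = n-1$ the new element retains the first/last component that a $(j,q)$-semi-path must have at its ends. This produces a non-accepting $(j,q)$-semi-path of length exactly $n-1$.

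\textbf{Remaining case: every step changes the position by $\pm 1$.} Then the position sequence $\myindex(p_0),\dots,\myindex(p_{n-1})$ is a closed $\pm 1$ lattice walk whose first and last entry is $j$; if it is not constant (and it is not, since a constant-position walk would contain a direction-$0$ step), then its extreme value away from $j$ is attained strictly between the endpoints, so $p$ contains a \emph{peak} or \emph{valley}: three consecutive elements $p_{i-1},p_i,p_{i+1}$ at positions $u,u+1,u$ (peak) or $u,u-1,u$ (valley). At a peak the outgoing $\gamma$-edge of $p_i$ lies in $\gamma(u)$ and the incoming one in $\gamma(u+1)$, and closure condition 2 (condition 3 for a valley) composes them into a cycle in $\eta(u)$; when $p_i$ is itself a cycle element, condition 4 is invoked as well. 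Using this new $\eta(u)$-cycle one collapses the turn and re-establishes the semi-path conditions at the splice, propagating $f$-bits by maxima as before, and again checking the endpoint shape if the turn sits at the start or end of $p$.

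\textbf{Main obstacle.} The crux is this last case. Collapsing a peak naturally fuses the three elements $p_{i-1},p_i,p_{i+1}$ into one, which shortens $p$ by two rather than by one, and one cannot in general reduce a pure $\pm1$ turn by a single element, since the fused element would have to be rejoined to its successor by a direction-$0$ $\gamma$-edge that need not exist. The route I expect to use to land at length exactly $n-1$ is to exploit that a pure $\pm1$ $(j,q)$-semi-path composes, through conditions 1--4, to a cycle $(q,0,q)\in\eta(j)$ (the non-accepting hypothesis forces its $f$-bit to $0$ and $q\notin F$), and then to re-assemble, from that cycle together with an untouched initial segment of $p$, a non-accepting $(j,q)$-semi-path of the required length; the fallback is to weaken the conclusion to ``length strictly less than $n$'', which is all that the downstream argument needs (it iterates the lemma down to the length-$1$ semi-paths $(j,q,0,q)$ with $q\notin F$), noting that in the pure $\pm1$ case $n$ is odd so reductions by two still reach every shorter length of the correct parity. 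Either way, the heavy lifting is the case-by-case verification of semi-path conditions 1--5 at the splice, with sub-cases on whether the fused elements are configurations or cycle elements.
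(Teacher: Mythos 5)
Your proposal follows essentially the same route as the paper's proof: fuse two adjacent same-index elements into one cycle element via annotation closure conditions 1 and 4, and, when every step moves by $\pm 1$, collapse a peak (or valley) via closure condition 2 (resp.\ 3) together with 4. The ``obstacle'' you flag is real but is shared by the paper's own argument, whose eight peak cases merge \emph{three} elements into one and thus yield length $n-2$ rather than the stated $n-1$; as you observe, the weaker conclusion ``strictly shorter'' is all that Theorem 3 uses, since iteration always reaches a length-one semi-path (a length-two $(j,q)$-semi-path has two same-index elements, and the pure $\pm 1$ case has odd length). If anything, your treatment is slightly more careful than the paper's, which considers only the maximal-index element and so omits the valley case that arises when the walk dips below $j$ and the maximum is attained only at the endpoints.
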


\begin{proof}
Suppose first that $p$ contains two adjacent elements $c_i=(j,q_1)$ and $c_{i+1}=(j,q_2)$, then we can combine them and replace them by a new element $c=(j,q_1,f,q_2)$, with $f=\final(q_2) = 0$ due to conditions (1) and (2) of the definition of semi-path and because for all $q$ that occur in $p$ we have that $\final(q) = 0$ because $p$ is non-accepting.  Note that by the closure properties of annotations we have that $(q_1,0,q_2)\in\eta(j)$, so $p'=c_1,\ldots,c_{i-1},c,c_{i+2},\ldots,c_{|p|}$ is also a non-accepting semi-path. Similarly, if $c_i=(j,q_1,0,q_2)$ and $c_{i+1}=(j,q_2,0,q_3)$. Then we can combine them and replace them by a new element $(j,q_1,0,q_3)$ due to conditions (1) and (5). Similarly, if $c_i = (j,q_1)$ and $c_{i+1} = (j,q_2,0,q_3)$, then  we can combine them into a new element $c=(j,q_1,0,q_3)$ due to conditions (1) and (3). Also, if $c_i = (j,q_1,0,q_2)$ and $c_{i+1} = (j,q_3)$,  then  we can combine them into a new element $c=(j,q_1,0,q_3)$ due to conditions (1) and (4). Thus, we can assume that $p$ does not have two adjacent elements with the same index.

It follows that there is an element $c_i$ such that $\myindex(c_i)$ is maximal and $i$ is maximal. Then there are the following eight cases to consider, depending on whether $c_{i-1}$, $c_i$, and $c_{i+1}$ are cycles.
\begin{enumerate}
\item
$c_{i-1}=(j-1,q_1)$, $c_i=(j,q_2)$, and $c_{i+1}=(j-1,q_3)$: In this case we can combine $c_{i-1}$, $c_i$, and $c_{i+1}$ into a new cycle $(j-1,q_1,0,q_3)$.

\item
$c_{i-1}=(j-1,q_1,0,q_2)$, $c_i=(j,q_3)$, and $c_{i+1}=(j-1,q_4)$: In this case we can combine $c_{i-1}$, $c_i$, and $c_{i+1}$ into a new cycle $(j-1,q_1,0,q_4)$.

\item
$c_{i-1}=(j-1,q_1,0,q_2)$, $c_i=(j,q_3,0,q_4)$, and $c_{i+1}=(j-1,q_5)$: In this case we can combine $c_{i-1}$, $c_i$, and $c_{i+1}$ into a new cycle $(j-1,q_1,0,q_4)$.

\item
$c_{i-1}=(j-1,q_1,0,q_2)$, $c_i=(j,q_3,0,q_4)$, and $c_{i+1}=(j-1,q_5,0,q_6)$: In this case we can combine $c_{i-1}$, $c_i$, and $c_{i+1}$ into a new cycle $(j-1,q_1,0,q_6)$.
\item
$c_{i-1}=(j-1,q_1)$, $c_i=(j,q_2)$, $c_{i+1} = (j-1,q_3,f,q_4)$: In this case we can combine $c_{i-1}$, $c_i$, and $c_{i+1}$ into a new cycle $(j-1,q_1,0,q_4)$.

\item
$c_{i-1}=(j-1,q_1)$, $c_i=(j,q_2,0,q_3)$, $c_{i+1}=(j-1,q_4)$:
In this case we can combine $c_{i-1}$, $c_i$, and $c_{i+1}$ into a new cycle $(j-1,q_1,0,q_4)$.

\item
$c_{i-1}=(j-1,q_1)$, $c_i=(j,q_2,0,q_3)$, $c_{i+1}=(j-1,q_4,0,q_5)$:
In this case we can combine $c_{i-1}$, $c_i$, and $c_{i+1}$ into a new cycle $(j-1,q_1,0,q_5)$.

\item
$c_{i-1}=(j-1,q_1,0,q_2)$, $c_i=(j,q_3)$, $c_{i+1}=(j-1,q_4,0,q_5)$:
In this case we can combine $c_{i-1}$, $c_i$, and $c_{i+1}$ into a new cycle $(j-1,q_1,0,q_5)$. \qedhere

 \end{enumerate}
\end{proof}

\section{Main Proofs}

Now that we have provided the necessary definitions and lemmas, we are ready to proceed to the main proofs of this section.

\begin{theorem}
    A two-way alternating automaton on finite words $A = (Q, \Sigma, \delta, q_0, F)$
    accepts an input word $w$ iff $A$ has an accepting strategy word $\gamma$ over $w$.
\end{theorem}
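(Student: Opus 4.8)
The plan is to route the equivalence through the two-player \emph{acceptance game} $G(A,w)$ that the surrounding text alludes to. Its positions are the configurations $(q,i)$ of $A$ on $w$, together with auxiliary choice-vertices; at a configuration $(q,i)$ with $\theta=\delta(q,w_i)$, if $\theta=\textbf{true}$ the play moves to a winning sink, if $\theta=\textbf{false}$ to a losing sink, and otherwise Protagonist picks some $M\in\mymodels(\theta)$ all of whose directions are in-bounds (if none exists she loses) and Antagonist then picks $\tau\in M$, moving the play to the $c'$ with $(q,i)\sderives{\tau}c'$. Protagonist wins a play iff it reaches the winning sink, or is infinite and some $f\in F$ occurs as the state component of infinitely many configurations along it --- a B\"uchi condition. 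Since $Q$ and $|w|$ are finite and every $\mymodels(\theta)$ is finite, $G(A,w)$ has a finite game graph, so by the classical memoryless determinacy of finite B\"uchi games Protagonist has a positional (history-free) winning strategy on her winning region $W$. The first routine observation is that a run of $A$ on $w$ from $0$ is exactly the unravelling of a Protagonist strategy against \emph{all} Antagonist replies --- the children of a configuration node record the $\tau\in M$ chosen there, $Accept$-leaves are winning-sink plays, $\textbf{false}$-leaves are losing-sink plays, and infinite branches are infinite plays --- so the run is accepting iff the underlying strategy is winning. Hence $A$ accepts $w$ (from position $0$) iff Protagonist wins $G(A,w)$ from $(q_0,0)$.

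For the ($\Leftarrow$) direction, given an accepting strategy word $\gamma$ I unravel it from $(q_0,0)$: the children of $(q,i)$ are the configurations $(q',i+d)$ with $(q,d,q')\in\gamma(i)$, and I attach a single $Accept$-child to each reachable configuration with $\delta(q,w_i)=\textbf{true}$ that has no outgoing $\gamma$-transition. Because $\gamma$ is on $w$, each resulting child-set of a configuration satisfies its transition formula, so --- after checking that an accepting $\gamma$ leaves no reachable configuration ``stuck'' with a nontrivial transition formula but no move, and that $\textbf{false}$-configurations cannot be reached --- this tree is a genuine run. Every finite branch of it is a finite path of $\gamma$, hence ends with $\delta=\textbf{true}$ at an $Accept$-leaf and is accepting (this is exactly the earlier lemma that every finite path of a strategy word on $w$ is accepting); every infinite branch is an infinite path of $\gamma$, hence visits some $f\in F$ infinitely often and is accepting. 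So the tree is an accepting run.

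For the ($\Rightarrow$) direction, suppose $A$ accepts $w$; by the correspondence above Protagonist has a positional winning strategy $\sigma$ with $(q_0,0)\in W$. For each $(q,i)\in W$ with $\delta(q,w_i)\notin\{\textbf{true},\textbf{false}\}$ write $M_{(q,i)}=\sigma(q,i)$, and set $\gamma(i)=\{(q,d,q') : (q,i)\in W,\ \delta(q,w_i)\notin\{\textbf{true},\textbf{false}\},\ (q',d)\in M_{(q,i)}\}$ for $0\le i<|w|$ and $\gamma(i)=\emptyset$ otherwise. Since $\sigma$ is winning, each $(q',d)\in M_{(q,i)}$ keeps the play in $W$ and in-bounds, so $\gamma$ never moves past either end of $w$, every reachable configuration lies in $W$, and --- using that $M_{(q,i)}$ satisfies $\delta(q,w_i)$ and that $q_0\in\origin(\gamma(0))$ (the degenerate case $\delta(q_0,w_0)=\textbf{true}$ dispatched by a trivial normalization) --- $\gamma$ is on $w$. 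Its finite paths are accepting by the aforementioned lemma; each infinite path of $\gamma$ is a play of $G(A,w)$ consistent with $\sigma$ and started in $W$, hence won by Protagonist, so some $f\in F$ occurs infinitely often along it and the path is accepting. Thus $\gamma$ is an accepting strategy word.

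The crux is the ($\Rightarrow$) direction, specifically the passage from a winning strategy to a \emph{memoryless} one: an accepting run already gives Protagonist a winning strategy, but a priori that strategy is history-dependent --- the same configuration $(q,i)$ may be expanded with different sets $M$ in different parts of the run-tree --- whereas a strategy word commits to a single $M$ per configuration. Naively reusing one configuration's choice everywhere (say, the one from its first occurrence) can create new infinite branches that never revisit $F$, destroying the B\"uchi guarantee, so one genuinely needs positional determinacy of finite B\"uchi games, or equivalently an explicit rank/progress-measure argument showing that from every position of $W$ one can force reaching an $F$-configuration while staying in $W$ and then iterate. The two-wayness of $A$ adds no difficulty here, since the game graph is still finite; it is the later NFW construction, not this theorem, that must cope with the read head moving backward.
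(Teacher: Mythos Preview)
Your proposal is correct and follows essentially the same approach as the paper: both set up the two-player acceptance game $G_{A,w}$, identify accepting runs with winning Protagonist strategies, invoke memoryless determinacy (the paper cites parity-game determinacy via Emerson--Jutla, you the equivalent B\"uchi-game result) to pass to a positional strategy, and then read off the strategy word; the converse direction in both cases unravels $\gamma$ into a run. The only presentational difference is that the paper routes the forward direction through an explicitly constructed \emph{history strategy} before invoking determinacy, whereas you observe the run/strategy correspondence directly and are a bit more careful to flag the edge cases (stuck configurations, the $\delta(q_0,w_0)=\textbf{true}$ degeneracy).
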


\begin{proof}

    Let $G_{A,w}$ be the following game between two players, the Protagonist and Antagonist.
    Intuitively, the Protagonist is trying to show that $A$ accepts the input word $w$, and the Antagonist is trying
    to show that it does not. A $\textit{configuration}$ of the game is a pair in $Q \times \mathbb{N}$. The initial configuration is
    $(q_0, 0)$. Consider a configuration $(q,n)$. If $\delta(q,w_n) = \textbf{true}$ then the Protagonist wins immediately, and if
    $\delta(q,w_n) = \textbf{false}$ then the Antagonist wins immediately.
    Otherwise, the Protagonist chooses a set $\{(q_1,c_1), \ldots, (q_m,c_m)\}$ that
    satisfies $\delta(q, w_n)$. The Antagonist responds by choosing an element $(q_i, c_i)$ from the set. The new configuration is then
    $(q_i, n + c_i)$. If $n + c_i$ is undefined then the Antagonist wins immediately.
    Consider now an infinite play $\iota$. Let $\inf(\iota)$ be the set of states in $Q$ that repeat infinitely
    in the sequence of configurations in $\iota$. The Protagonist wins if $\inf(\iota) \cap F \neq \emptyset$.

    Suppose $A$ accepts $w$. Then there is an accepting run $\rho$ of $A$ on $w$. From $\rho$ we can obtain a winning history
    strategy $h$ for the Protagonist in the game $G_{A,w}$; we will use the existence of $h$ to prove that the Protagonist
    has a winning strategy word $\gamma$.
    Let $\mypath$ be a function that takes as input a node $x$ in $\rho$ and returns a sequence
    $s \in (Q \times \mathbb{N})^*$ such that $s$ is the sequence of configuration labels encountered when traversing from the
    root to $x$ (if $x$ is the root, then $\mypath$ returns $\epsilon$). Let $f$ be a function that takes as input
    a node $x$ in $\rho$, which has configuration $(q,u)$, and returns the set $\{(d,q') \mid (q, u+d) \text{ is the
    configuration of some child of } x \}$.

    Let $h$ be a history strategy defined as follows: $h(\epsilon) = (q_0,0)$ and for every node $x$ in $\rho$, $h(\mypath(x)) = f(x)$. Intuitively, following $h$ will result in a run identical to $\rho$.

    Now, $h$ is on $w$. Suppose not, that is, suppose that $h$ is not on $w$. Then one of the following must hold:
    \begin{enumerate}
        \item $h(\epsilon) \neq (q_0,0)$
        \item $h(\epsilon)$ has some element of the form $(q, -1)$
        \item There is some history path $p$ such that the last configuration in $p$ has index 0 and $h(p)$ has an element of the form $(q,-1)$
        \item There is some history path $p$ such that the last configuration in $p$ has index $|w| - 1$ and $h(p)$ has an element of the form $(q,1)$
        \item There is some history path $p$ such that the last configuration in $p$ is $(q,u)$ and $h(p)$ does not satisfy $\delta(q,w_u)$
    \end{enumerate}

    Condition 1 does not hold because $h(\epsilon)$ is defined to be $(q_0,0)$. Similarly, condition 2 does not hold because
    if $h(\epsilon)$ has an element of the form $(q,-1,q')$ then $(q',-1) \in f(x_r)$, where $x_r$ is the root of $\rho$,
    which implies, contra assumption, that $\rho$ is not accepting.

    Now to address conditions 3-5. Let $p$ be a history that is defined for $h$. Let $q$ be the state label of the last element of $p$.
    By the definition of $h$ we have that there is a node $x$ in $\rho$ with configuration $(q,i)$, where $0 \leq i < |w|$ and $q \in Q$.
    If condition 3 holds, then $i = 0$ and $h(p)$ has an element of the form $(q,-1)$. But this would imply that some node $x$ in $\rho$
    has some child whose index is -1, which would mean that, per Definition \ref{def:run}, $\rho$ is not a run of $A$ on $w$. But $\rho$ is an accepting run of $A$ on $w$.

    If condition 4 holds, then $i = |w| - 1$ and $h(p)$
    has an element of the form $(q,1)$.

    But this would imply that some node $x$ in $\rho$ has some child whose index is greater than $|w| - 1$, which would mean that, per Definition \ref{def:run}, $\rho$ is not a run of $A$ on $w$. But $\rho$ is an accepting run of $A$ on $w$.

    If condition 5 holds, there is some node $x$ in $\rho$ with configuration
    $(q,u)$ such that
    $\mypath(x) = p$, $h(p) = f(x)$, and $f(x)$ does not satisfy $\delta(q,w_u)$. But since $f(x)$ is the set of configuration labels of
    $x$'s children, this implies that the set of configuration labels of $x$'s children does not satisfy $\delta(q,w_u)$, which
    contradicts the assumption that $\rho$ is accepting. Since none of the conditions hold, $h$ is on $w$.

    Now to show that $h$ is accepting. Let $\beta = (q_1,u_1),(q_2,u_2),\ldots$ be a history path in $h$. If $\beta$ is finite and has length $n$, then it
    must be the case that $\delta(q_n, w_{u_n}) = \textbf{true}$, else condition 4 would be violated. So if
    $\beta$ is finite then $\beta$ is an accepting history path. Suppose that $\beta$ is infinite.
    Since $\rho$ is accepting, every
    infinite length branch in $\rho$ has infinitely many nodes labeled with configurations with an accepting state.
    Let $p$ be a history that represents a partial branch of an infinite branch of $\rho$. Now, since $p$ is finite
    it cannot have infinitely many nodes labeled with configurations with an accepting state, so there must be
    another node labeled with an accepting state in $\beta$ that is not in $p$, and there must be another history path $p'$
    that is an extension of $p$ that leads to this node. Formally, there exists a history path $p'$ such that $p$ is a subsequence
    of $p'$ and $(d,q) \in f(p')$ for some $q \in F$. So $(d,q) \in h(p')$. This implies that $q \in \inf(\beta)$ and since
    $q$ is an accepting state we have that $\beta$ is accepting. Since all history paths in $h$ are accepting we have that $h$ is a winning history strategy on $w$.

    Now, $G_{A,w}$ is a special instance of a parity game. A classic result for parity games
    is that if a parity game has a winning history strategy,
    then it has a memoryless winning strategy, i.e., it has a strategy that relies only on the current
    configuration of the game \cite{emersonandjutla}. Let
    $h' : (Q \times \mathbb{N}) \rightarrow 2^{\{-1,0,1\} \times Q}$ be a winning memoryless strategy for $G_{A,w}$.
    Without loss of generality, $h'$ is equivalent to the function $\gamma : \mathbb{N} \rightarrow 2^{Q \times \{-1,0,1\} \times Q}$.
    But $\gamma$ is an accepting strategy word over $w$.

    Suppose the Protagonist has a winning strategy $\gamma$ in the game $G_{A,w}$. From $\gamma$ we can obtain an accepting
    run $\rho$ of $A$ on $w$ using the following inductive construction.
    In the base case, let the root of $\rho$, $x_r$, have configuration $(q_0,0)$. For the inductive step,
    if $x$ is a node in $\rho$ and
    $x$ has configuration $(q,u)$, then for every configuration $c$ in the set $\{ (q',u+d) \mid (q,d,q') \in \gamma(u) \}$, we have that
    $x$ has a child with configuration $c$. Because $\gamma$ is a winning strategy, all finite strategy paths reach a $\textbf{true}$ transition,
    i.e., it is not possible to reach a $\textbf{false}$ transition. So all finite branches in $\rho$ must be accepting.
    Moreover, because $\gamma$ is a winning strategy, all infinite strategy paths transition through at least one accepting state infinitely many times.
    So all infinite branches in $\rho$ must be accepting. So $\rho$ is an accepting run, which means $A$ accepts $w$.
\end{proof}

In the ``if" direction of the proof,
an accepting run is built by assigning children to nodes according to the transitions prescribed by
the strategy word. One may wonder why, in the ``only if'' direction of the proof, we do not build a strategy word by looking at the accepting run
and add transitions to the strategy word according to the transitions that occur in the accepting run.
The reason is that, because disjunctive transitions allow for nondeterminism,
nodes in an accepting run do not necessarily make
the same transitions even if they are labeled with the same state and index.

One response to this observation might be to attempt to resolve the nondeterminism by having our
strategy word prescribe the satisfaction of all disjuncts that are satisfied somewhere in the
accepting run. The argument for this approach might go something like the following. Suppose there is is an accepting
run $\rho$ with two nodes with label $(A \vee B, n)$ and that one of the nodes has children that satisfy $A$ and the other
node has children that satisfy $B$. But since $\rho$ is an accepting run, and satisfying
either of $A$ or $B$ leads to an accepting branch, it might seem reasonable to have our strategy word prescribe
the satisfaction of both $A$ and $B$ whenever the state $A \vee B$ occurs at index $n$.

The problem with this approach is that there are some runs for which the only way to achieve an accepting
branch after satisfying $A$ is for the run to cycle back to $(A \vee B, n)$ and then to satisfy $B$.
Here is a brief example to illustrate the problem. Let $\pi = a$ and $A = (Q,\Sigma,\delta,q_0,F)$,
where $Q = \{ q_0, q_1, q_2  \}, \Sigma = \{ a \}, F = \emptyset$, and $\delta$ is defined
as follows:
\begin{itemize}
    \item $\delta(q_0,a) = (q_1,0) \vee (q_2,0)$
    \item $\delta(q_1,a) = (q_0,0)$
    \item $\delta(q_2,a) = \textbf{true}$
\end{itemize}

There are infinitely many accepting runs of $A$ on $\pi$ that sometimes transition to
$(q_1,0)$ and sometimes transition to $(q_2,0)$ from state $q_0$
and index 0. However, in a run where both disjuncts are always followed from $(q_0,0)$, there is an infinite
branch that cycles between nodes labeled with $(q_0,0)$ and $(q_1,0)$. Since this branch does not contain any nodes labeled
with an accepting state it, is rejecting.

\begin{theorem}
    A two-way alternating automaton on finite words $A = (Q,\Sigma,q_0,\delta,F)$ accepts an input word $w$ iff $A$ has a strategy word $\gamma$ over $w$ and
    an accepting annotation $\eta$ of $\gamma$.
\end{theorem}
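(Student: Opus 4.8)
The plan is to reduce the statement to the preceding theorem (Theorem~2), which already gives $A$ accepts $w$ iff $A$ has an \emph{accepting} strategy word over $w$. So it suffices to prove, for a fixed strategy word $\gamma$ for $A$ on $w$, the following two facts: (i) if $\gamma$ is accepting, then $\gamma$ admits an accepting annotation $\eta$; and (ii) if $\gamma$ admits an accepting annotation $\eta$, then $\gamma$ is accepting. Composing (i) with the ``only if'' of Theorem~2 gives the ``only if'' direction here, and composing (ii) with the ``if'' of Theorem~2 gives the ``if'' direction. The intuition throughout is that an annotation entry $(q,f,q')\in\eta(u)$ records a \emph{closed walk} of $\gamma$ that leaves position $u$ in state $q$ and returns to position $u$ in state $q'$, with the flag $f$ equal to $1$ precisely when some transition on that walk entered a state of $F$; accepting annotations are exactly those with no $0$-flagged closed walk back to the same state, i.e.\ no ``bad cycle''.

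For~(i), I would let $\eta:|w|\to 2^{Q\times\{0,1\}\times Q}$ be the least mapping closed under the four closure conditions in the definition of annotation; this exists because those conditions are monotone. The key lemma, proved by induction on the derivation of a triple in $\eta$, is that every $(q,f,q')\in\eta(u)$ is witnessed by a nonempty closed walk of $\gamma$ from $(u,q)$ to $(u,q')$ such that $f=1$ iff some state strictly after the first state of the walk lies in $F$ (closure conditions~$1$--$3$ yield walks of length one or two directly from $\gamma$, and condition~$4$ concatenates two closed walks at $u$, taking the maximum of the flags). Now suppose for contradiction that this $\eta$ is not accepting, so $(q,0,q)\in\eta(u)$ for some $q,u$. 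Its witnessing walk $W$ is nonempty, returns to state $q$, and has flag $0$, so no state after its first position is in $F$; since $W$ returns to its starting state $q$, this forces $q\notin F$ as well, and hence $W$ visits no state of $F$ at all. Then $W^\omega$, the infinite concatenation of $W$ with itself, is a legal infinite path of $\gamma$ with $\inf(W^\omega)\cap F=\emptyset$, i.e.\ a non-accepting path, contradicting that $\gamma$ is accepting. Hence the least annotation of $\gamma$ is accepting.

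For~(ii), I would argue contrapositively: if $\gamma$ has a non-accepting path then no annotation of $\gamma$ is accepting. By the lemma that every finite path of $\gamma$ is accepting, a non-accepting path $\beta$ is infinite with $\inf(\beta)\cap F=\emptyset$. Pick an index beyond which $\beta$ never again visits a state of $F$; since there are only finitely many position--state pairs, some pair $(u,q)$ with $q\notin F$ recurs infinitely often past that index, and two consecutive such occurrences delimit a segment of $\beta$ of length $n>1$ that starts and ends at $(u,q)$ and visits no state of $F$. Regarded as a sequence of (non-cycle) configuration elements, this segment is a non-accepting $(u,q)$-semi-path on $\gamma$ and on \emph{any} annotation $\eta$ of $\gamma$ (the semi-path conditions involving cycles are vacuous, and condition~2 holds because $\beta$ follows $\gamma$). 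Applying the Shortening Lemma $n-1$ times yields a non-accepting $(u,q)$-semi-path of length~$1$, which by the remark following the definition of semi-path must be the single element $(u,q,0,q)$; but a semi-path element $(u,q,0,q)$ requires $(q,0,q)\in\eta(u)$, so $\eta$ is not accepting. As $\eta$ was arbitrary, $\gamma$ admits no accepting annotation, which is the contrapositive of~(ii).

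The main obstacle I anticipate is getting the flag bookkeeping in~(i) exactly right: one must verify that the inductive ``witnessing walk'' construction keeps $f$ in step with ``the walk has transitioned into an accepting state'', and in particular that a $0$-flagged \emph{closed} walk genuinely contains no occurrence of an $F$-state anywhere (this is where returning to the start state is used), so that its $\omega$-unfolding really is a non-accepting path rather than merely a path with finitely many $F$-visits. In~(ii) the only delicate point is the B\"uchi-style pigeonhole argument---isolating a recurring $F$-free position--state pair past the last visit to $F$---and then checking that the resulting finite segment literally meets the semi-path conditions, so that the already-established Shortening Lemma applies verbatim and drives the length down to the telltale element $(u,q,0,q)$.
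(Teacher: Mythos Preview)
Your proposal is correct and follows essentially the same approach as the paper. Both directions reduce to Theorem~2 and proceed exactly as you outline: for the ``if'' direction, both you and the paper take a non-accepting infinite path, use pigeonhole to isolate an $F$-free segment returning to some $(u,q)$, regard it as a non-accepting $(u,q)$-semi-path, and iterate the Shortening Lemma down to the single element $(u,q,0,q)$, forcing $(q,0,q)\in\eta(u)$. The only noteworthy difference is in the ``only if'' direction: the paper constructs $\eta$ \emph{semantically} as the set of all triples $(q,f,q')$ witnessed by actual $(q,q')$-cycles of $\gamma$ at each position (with the appropriate flag), then verifies the four closure conditions directly; you instead take the \emph{least} annotation and prove by induction on derivations that every triple in it is witnessed by such a cycle. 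Both choices work, and the contradiction step (a triple $(q,0,q)$ yields an $F$-free cycle whose $\omega$-iteration is a non-accepting path) is identical. Your flag bookkeeping is fine; the point you flag---that a $0$-flagged closed walk back to $q$ forces $q\notin F$ because $q$ reappears as the last state---is exactly what is needed, and the paper relies on the same observation implicitly.
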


\begin{proof}

$(\leftarrow)$ Suppose $A$ has a strategy word $\gamma$ over $w$ and an accepting annotation $\eta$ of $\gamma$. For purpose of contradiction suppose that $\gamma$ is not accepting. So there is some strategy path $\beta$ in $\gamma$ that is not accepting. By Lemma 5, all finite strategy paths in $\gamma$ are accepting, so $\beta$ is an infinite strategy path. Because $\beta$ is not accepting, it has only finitely many elements with accepting states. So after a finite prefix $\beta$ has an infinite sequence that has no accepting states. Because there are a finite number of states and indexes, some state-index pair $(j,q)$ must occur more than once. Let $c$ be the first subsequence of $\beta$ that begins with $(j,q)$ and ends at $(j,q)$ without visiting any accepting states. Clearly, $c$ is a non-accepting $(j,q)$ semi-path. By the Shortening Lemma we have that there is a non-accepting $(j,q)$ semi-path $c'$ of length 1. By the definition of non-accepting $(j,q)$ semi-path $c'_0$ must be $(j,q,0,q)$. By the closure properties for semi-paths we have that $(q,0,q) \in \eta(j)$. But since $\eta$ is accepting it has no elements of the form $(q,0,q)$. We have reached a contradiction. So $\gamma$ must be an accepting strategy word. By Theorem 2 we have that $A$ accepts $w$.

$(\rightarrow)$ Suppose $A$ accepts $w$. From Theorem 2 we have that $A$ has an accepting strategy word $\gamma$ over $w$. Consider the set of subsequences of strategy paths of $\gamma$ of length greater than 1 such that the first element is $(j,q)$ and the last element is $(j,q')$ for some $0 \leq j < |w|$ and $q,q' \in Q$; that is, the set of subsequences of strategy paths in $\gamma$ that begin and end with the same index. We call such a subsequence a $(q,q')$-cycle of $\gamma$ at index $j$. We say that such a cycle is accepting if one of the states visited along it, including the last state $q'$ but excluding the first state $q$, is an accepting state in $F$.
Let $\eta : |w| \rightarrow 2^{Q \times \{0,1\} \times Q}$ be a mapping such that for $0\leq j < |w|$ we have that $\eta(j)$ consists of all triples $(q,f,q')$ such that $(q,q')$ is a cycle of $\gamma$ at $j$, with $f=1$ if this cycle is accepting, and $f=0$ otherwise.

We claim that $\eta$ is an accepting annotation of $\gamma$. We first show that $\eta$ is an annotation. Now, $\eta$ is an annotation if it satisfies the four closure conditions for annotations:

\begin{enumerate}
\item
If $(q,0,q') \in \gamma(j)$ then $(j,q),(j,q')$ is a $(q,q')$-cycle of $\gamma$ at $j$. So $(q,\final(q'),q') \in \eta(j)$.
\item
If $(q,1,q') \in \gamma(j)$ and $(q',-1,q'') \in \gamma(j+1)$ then $(j,q),(j+1,q'),(j,q'')$ is a $(q,q'')$-cycle of $\gamma$ at $j$. So $(q,\max(\final(q'),\final(q'')),q'') \in \eta(j)$.
\item
If $(q,-1,q') \in \gamma(j)$ and $(q',1,q'') \in \gamma(j-1)$ then $(j,q),(j-1,q'),(j,q'')$ is a $(q,q'')$-cycle of $\gamma$ at $j$. So $(q,\max(\final(q'),\final(q'')),q'') \in \eta(j)$.
\item
If $(q_1,f_1,q_2) \in \eta(j)$ and $(q_2,f_2,q_3) \in \eta(j)$ then there is a $(q_1,q_2)$-cycle of $\gamma$ at $j$ and there is a $(q_2,q_3)$-cycle of $\gamma$ at $j$. It follows that there is a $(q_1,q_3)$-cycle of $\gamma$ at $j$. Moreover, if the $(q_1,q_2)$ cycle is accepting or the $(q_2,q_3)$ cycle is accepting then the $(q_1,q_3)$ cycle is accepting, otherwise the $(q_1,q_3)$ cycle is not accepting. It follows that $(q_1,\max(f_1,f_2),q_3) \in \eta(j)$.
\end{enumerate}

For the purpose of contradiction suppose that $\eta$ is not an accepting annotation, that is, for some $j$ $(q,0,q) \in \eta(j)$. Then there is a non-accepting $(q,q)$ cycle $c$ of $\gamma$ at $j$. It follows that there is an infinite strategy path $\beta$ in $\gamma$ that consists of a finite prefix $p$ followed by infinite concatenations of $c$. Because $c$ is non-accepting it has no elements with accepting states, so all visits to accepting states occur within $p$, which is finite. Because $\beta$ has only finitely many visits to accepting states, $\beta$ is not accepting. So $\gamma$ is not an accepting strategy word. But $\gamma$ is an accepting strategy word. We have reached a contradiction. So $\eta$ is an accepting annotation, as was to be shown.
\end{proof}

\begin{corollary}
Let $A = (Q, \Sigma, q_0, \delta, F)$ be a two-way alternating automaton on finite words. Then $A$ accepts an input word $w$ iff there is a word $\gamma$ of length $|w|$ over $2^{Q \times \{-1,0,1\} \times Q}$ and a word $\eta$ of length $|w|$ over $2^{Q \times \{0,1\} \times Q}$ such that the following holds:

\begin{enumerate}
    \item For all $(q,d,q') \in \gamma(0)$, we have that $q = q_0$ and $d \neq -1$
    \item For all $(q,d,q') \in \gamma(|w|-1)$, we have that $d \neq 1$
    \item For each $i$, $0 \leq i < |w|$, for each element of $\{q \mid (q,d,q') \in \gamma(i)\}$, we have that the set $\{(d,q'') \mid (q,d,q'') \in \gamma(i)\}$ satisfies $\delta(q,w_i)$
    \item If $(q,0,q') \in \gamma(i)$, then we have that $(q,\final(q'),q') \in \eta(i)$
       \item If $(q,f_1,q') \in \eta(i)$ and $(q',f_2,q'') \in \eta(i)$, we have that $(q,\max(f_1,f_2),q'') \in \eta(i)$
    \item For all $i$, $0 \leq i < |w|$, we have that $\eta(i)$ has no elements of the form $(q,0,q)$

    \item If $i_1 = i_2-1$, $(q,1,q') \in \gamma(i_1)$, and $(q',-1,q'') \in \gamma(i_2)$, we have that $(q,\max(\final(q'),\allowbreak \final(q''),q'') \in \eta(i_1)$
    \item If $i_1 = i_2+1$, $(q,-1,q') \in \gamma(i_1)$, and $(q',1,q'') \in \gamma(i_2)$, we have that $(q,\max(\final(q'),\allowbreak \final(q''),q'') \in \eta(i_1)$

\end{enumerate}

\end{corollary}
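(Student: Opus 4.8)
The plan is to read the corollary as nothing more than Theorem 3 with the notions of ``strategy word over $w$'' and ``accepting annotation of $\gamma$'' spelled out in full. Conditions 1--3 are precisely the definition of $\gamma$ being a strategy word on $w$: condition 3 is the requirement that, at each position $i$, the move set offered from each origin state satisfies $\delta(q,w_i)$; condition 1 records that $\gamma(0)$ has no backward step (and, read together with the requirement that $q_0\in\origin(\gamma(0))$, that every transition available at position $0$ starts from $q_0$); and condition 2 records that $\gamma(|w|-1)$ has no forward step. Conditions 4, 7, 8, and 5 are, respectively, closure conditions 1, 2, 3, and 4 in the definition of an annotation $\eta$ of $\gamma$, and condition 6 says that $\eta$ has no element of the form $(q,0,q)$, i.e.\ that $\eta$ is accepting.

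For the ($\Leftarrow$) direction I would simply invoke the matching just described: a $\gamma$ and $\eta$ satisfying 1--8 constitute a strategy word over $w$ together with an accepting annotation of it, so Theorem 3 yields that $A$ accepts $w$.

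For the ($\Rightarrow$) direction, Theorem 3 hands us a strategy word $\gamma$ over $w$ and an accepting annotation $\eta$ of $\gamma$; these satisfy conditions 2--8 immediately, and they satisfy condition 1 once we prune $\gamma(0)$ down to $\{(q,d,q')\in\gamma(0) : q=q_0\}$, leaving every other label untouched. This pruning is harmless: it removes no triple with origin $q_0$, so $q_0\in\origin(\gamma(0))$ still holds and the satisfaction condition at position $0$ is unaffected (the set of moves offered from $q_0$ is unchanged); and since every closure condition of an annotation only becomes weaker when $\gamma$ is shrunk pointwise, $\eta$ remains an annotation of the pruned $\gamma$ and is still accepting. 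Hence the pruned $\gamma$ and $\eta$ satisfy all of 1--8.

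The content here is bookkeeping rather than mathematics: the work is in lining up the eight enumerated items with the right clauses of the definitions of strategy word, annotation, and accepting annotation, and in checking that the forward-direction pruning preserves all three properties simultaneously. The one place that needs a little care is the mild mismatch between condition 1 and the definition of strategy word noted above --- one should either strengthen condition 1 to include $q_0\in\origin(\gamma(0))$, or observe that a strategy word's initial label is necessarily nonempty and carry that observation into the argument --- since otherwise an all-empty $\gamma$ and $\eta$ would vacuously satisfy conditions 1--8 even when $A$ rejects $w$.
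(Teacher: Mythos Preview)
Your approach is the same as the paper's: the paper's entire proof is two sentences invoking Theorem~3 and saying that conditions 1--8 are just the definitions of strategy word on $w$ and accepting annotation spelled out. You do exactly this, but with more care.

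In fact you are \emph{more} careful than the paper. The paper's proof does not address the two mismatches you flag. First, the definition of ``strategy word on $w$'' only requires $q_0\in\origin(\gamma(0))$, not that \emph{every} triple in $\gamma(0)$ has origin $q_0$; your pruning step in the $(\Rightarrow)$ direction is a genuine (if easy) fix that the paper omits. Second, and more seriously, your observation about the all-empty $\gamma$ and $\eta$ is a real gap in the corollary as stated: conditions 1--8 are all vacuously satisfied by $\gamma(i)=\eta(i)=\emptyset$ for every $i$, regardless of whether $A$ accepts $w$, so the $(\Leftarrow)$ direction is false without the additional hypothesis $q_0\in\origin(\gamma(0))$. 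The paper's two-sentence proof simply does not notice this. Your suggested repair (strengthen condition~1, or carry nonemptiness of $\gamma(0)$ as a side condition) is the right one.
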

\begin{proof}
By Theorem 3 we have that $A$ accepts $w$ iff it has an accepting strategy word $\gamma$ over $w$ and an accepting annotation $\eta$ of $\gamma$. By the definitions of strategy word and accepting annotation provided earlier in this section, $\gamma$ and $\eta$ are words that meet the conditions of this corollary.
\end{proof}

\begin{theorem}
  Let $A = (Q,\Sigma,\delta,q_0,F)$ be a 2AFW and let $w$ be a word over the alphabet $\Sigma=2^{AP}$. There is a deterministic finite automaton (DFA) $A_d = (S,T,\Delta,s_0,G)$, with alphabet $T = 2^{AP} \times 2^{Q \times \{-1,0,1\} \times Q} \times 2^{Q \times \{0,1\} \times Q}$ such that $A$ accepts $w$ iff there are a word $\gamma$ of length $|w|$ over the alphabet $2^{Q \times \{-1,0,1\}\times Q}$ and a word $\eta$ also of length $|w|$ over the alphabet $2^{Q \times \{0,1\}\times Q}$ and $A_d$ accepts the word $(w_0, \gamma_0, \eta_0), (w_1, \gamma_1, \eta_1), \ldots, (w_{|w|-1}, \gamma_{|w|-1}, \eta_{|w|-1})$.

\end{theorem}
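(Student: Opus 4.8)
The plan is to treat a candidate triple $(w,\gamma,\eta)$ as a single word over $T = 2^{AP} \times 2^{Q\times\{-1,0,1\}\times Q} \times 2^{Q\times\{0,1\}\times Q}$ whose $i$-th letter is $(w_i,\gamma(i),\eta(i))$, and to let $A_d$ simply check, while scanning left to right, the eight conditions of the preceding Corollary. The key observation is that each of those conditions is local: conditions 1--6 concern a single position, and conditions 7--8 relate a position to its predecessor (the two cases $i_2 = i_1 + 1$ and $i_2 = i_1 - 1$ both describe a ``dip'' between two adjacent positions $u$ and $u+1$). Hence a DFA that keeps in its state the letter it read one step earlier has all the information it needs; in particular $A_d$ depends only on $A$ and not on $w$.

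Concretely, I would take $S = \{s_0\} \cup T \cup \{\bot\}$, with initial state $s_0$ and a sink state $\bot$ that loops on every letter and is not accepting. The transition out of $s_0$ on a letter $(a,g,e)$ goes to the state $(a,g,e)$ provided the position-$0$ requirements hold --- condition 1 (every triple in $g$ has origin $q_0$ and direction $\neq -1$) together with conditions 3--6 evaluated at this position (condition 3 using $\delta(q,a)$ for each $q \in \origin(g)$) --- and to $\bot$ otherwise. From a state $(a',g',e')$ on a letter $(a,g,e)$, $A_d$ goes to $(a,g,e)$ provided conditions 3--6 hold at the current position and conditions 7 and 8 hold for the adjacent pair $(\text{previous},\text{current})$ --- condition 7 imposing its obligation on $e'$ (the earlier $\eta$, carried in the state) and condition 8 on $e$ (the current $\eta$) --- and to $\bot$ otherwise. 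Finally I would let the accepting set be $G = \{(a,g,e) \in T : g \text{ has no element of the form } (q,1,q')\}$, which is exactly condition 2 applied to whatever turns out to be the last letter.

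For correctness it suffices, by the Corollary, to show that $A_d$ accepts $(w,\gamma,\eta)$ if and only if $\gamma$ and $\eta$ satisfy conditions 1--8 for $w$; the theorem then follows by chaining with the Corollary ($A$ accepts $w$ iff such $\gamma,\eta$ exist). A routine induction on the length of the prefix read shows that the run of $A_d$ on $(w,\gamma,\eta)$ avoids $\bot$ precisely when condition 1 holds, conditions 3--6 hold at every position, and conditions 7--8 hold at every adjacent pair; and that when $\bot$ is avoided, the final state is $(w_{|w|-1},\gamma(|w|-1),\eta(|w|-1))$, which lies in $G$ exactly when condition 2 holds as well. (Since a run of a 2AFW is defined only on a nonempty word, $w$ is nonempty, so $s_0$ is never the final state.) Combining these two facts, $A_d$ accepts $(w,\gamma,\eta)$ iff all eight conditions hold.

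I do not expect a genuine obstacle here: the content is pure bookkeeping, and the only points needing care are making $\Delta$ total and deterministic (always falling through to $\bot$ when any checked condition fails), attributing each of the two adjacency conditions 7 and 8 to the correct position so that each is verified exactly once, and correctly pushing the two boundary conditions to the edges of the computation --- condition 1 into the transition out of $s_0$, condition 2 into the choice of accepting states. The state space has size $2 + |T|$, exponential in $|Q|$ but finite; the finer space accounting required for the PSPACE result is left to Section 8.
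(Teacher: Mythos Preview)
Your proposal is correct and takes essentially the same approach as the paper: build a DFA that checks Corollary~3.1's local conditions at each position while remembering the previous letter in its state so as to verify the two adjacency conditions, pushing condition~1 into the initial transition and condition~2 into the choice of accepting states. The only cosmetic difference is that the paper factors the construction as the product of two DFAs---a constant-size $A_1$ handling conditions 1--6 (using a distinguished state $s^1_{a'}$ to flag a rightward move at the last position) and an $A_2$ whose states record the previous $(\gamma,\eta)$-letter to handle conditions 7--8---whereas you build the single product automaton directly and store the full previous $T$-letter; your state set is therefore larger by a harmless factor of $2^{|AP|}$, but the idea and the correctness argument are the same.
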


\begin{proof}
  Our proof proceeds according to the following. First, we describe a DFA $A_1 = (S_1, T, \Delta_1, s^1_0, G_1)$ that accepts $w$ iff conditions 1-6 of Corollary 3.1 hold. Checking these conditions requires constant memory. Then we describe a DFA $A_2 = (S_2, T, \Delta_2, s^2_0, G_2)$ that accepts $w$ iff conditions 7-8 of Corollary 3.1 hold. Checking these conditions requires at every step remembering the previous input symbol, which requires a number of states exponential in the size of $A$. We take the intersection of $A_1$ and $A_2$ to achieve $A_d$, which satisfies the conditions of the theorem.

Here is the construction for $A_1$. Let $S_1 = \{s^1_0,s^1_a,s^1_{a'},s^1_r\}$, $T = 2^{AP} \times 2^{Q \times \{-1,0,1\} \times Q} \times 2^{Q \times \{0,1\} \times Q}$, $G_1 = \{s^1_a\}$. Intuitively, $s^1_0$ is the initial state; the state $s^1_a$ indicates that the input read so far has been acceptable and the automaton is ready to accept; the state $s^1_{a'}$ indicates that the input read so far has been acceptable, but the most recent move prescribed by $\gamma$ involves moving to the right, so the automaton is not ready to accept because if the last input character has just been read this indicates that $\gamma$ prescribes moving to the right, which is out of bounds of $w$; finally, the state $s^1_r$ is a rejecting sink.

$\Delta_1$ is defined according to the following:
\begin{enumerate}
\item $\Delta_1(s^1_a,(I,M,E)) = s^1_a$ if all of the following hold:
\begin{enumerate}
    \item For each element $q$ of $\{q \mid (q,d,q') \in M\}$ we have that the set $D=\{(d,q') \mid (q,d,q') \in M\}$ satisfies $\delta(q,I)$ and
    for all $(d,q')\in D$ we have $d \neq 1$.
    \item For all $(q,0,q') \in M$ we have that $(q,\final(q'),q') \in E$
    \item For each pair $(q,f_1,q'),(q',f_2,q'') \in E$ we have that $(q,\max(f_1,f_2),q'') \in E$
    \item $E$ contains no elements of the form $(q,0,q)$
\end{enumerate}
\item $\Delta_1(s^1_a,(I,M,E)) = s^1_{a'}$ if conditions 1b-1d hold and for each element of $\{q \mid (q,d,q') \in M\}$ we have that the set $D = \{(d,q') \mid (q,d,q') \in M\}$ satisfies $\delta(q,I)$ and for some element $q$ of $\{q \mid (q,d,q') \in M\}$ we have a triple $(q,1,q')\in M$.
\item $\Delta_1(s^1_a,(I,M,E)) = s^1_r$ if neither (1) nor (2) holds.
\item $\Delta_1(s^1_0,(I,M,E)) = s^1_a$ if conditions 1b-1d hold, and for each element $q$ of $\{q \mid (q,d,q') \in M\}$ we have that the set $D = \{(d,q') \mid (q,d,q') \in M\}$ satisfies $\delta(q,I)$ and for all $(d,q') \in D$ we have that $d = 0$.

\item $\Delta_1(s^1_0,(I,M,E)) = s^1_{a'}$ if conditions 1b-1d hold, for all elements $d$ of $\{d \mid (q,d,q') \in M\}$ we have that $d \neq -1$, and for some element $q$ of $\{q \mid (q,d,q') \in M\}$ we have that the set $D=\{(d,q') \mid (q,d,q') \in M\}$ satisfies $\delta(q,I)$ and $d = 1$
for some $(d,q')\in D$.
\item $\Delta_1(s^1_0,(I,M,E)) = s^1_r$ if neither (4) nor (5) holds.
\item $\Delta_1(s^1_{a'},(I,M,E)) = s^1_a$ if conditions 1a-1d hold
\item $\Delta_1(s^1_{a'},(I,M,E)) = s^1_{a'}$ if condition 2 holds
\item $\Delta_1(s^1_{a'},(I,M,E)) = s^1_r$ neither (1) nor (2) holds.
\end{enumerate}

Now for the construction of $A_2$. The state set is $S_2 = 2^{Q \times \{-1,0,1\} \times Q} \times 2^{Q \times \{0,1\} \times Q} \cup s^2_r$, the alphabet is $T = 2^{AP} \times 2^{Q \times \{-1,0,1\} \times Q} \times 2^{Q \times \{0,1\} \times Q}$, the accepting set $G_2 = S_2 \setminus s^2_r$, $s^2_0 = \gamma(0)$. Intuitively, the states of $A_2$ remember the most previous strategy word and annotation elements. This allows the DFA to check that conditions 7-8 of Corollary 3.1 hold; if not, then the DFA transitions to a rejecting sink. If the conditions hold, then the DFA transitions to the state corresponding to the current element of $\gamma$ and the evaluation proceeds. All states are accepting except for the rejecting sink, so if conditions 7-8 hold for all elements of $\gamma$ and $\eta$ then the DFA will be in an accepting state after consuming all input.

The transition function $\Delta_2$ is defined according to the following:
\begin{enumerate}
\item For all $(M,E) \in S_2$,
$\Delta_2((M,E),(I,M',E')) = (M',E')$ if the following conditions hold:
\begin{enumerate}
\item
If $(q,1,q') \in M$, $(q',-1,q'') \in M'$, then $(q,\max(\final(q'),\final(q''),q'') \in E$, and

\item
if $(q,-1,q') \in M'$, $(q',1,q'') \in M$,
then $(q,\max(\final(q'),\final(q'')),q'') \in E'$

\end{enumerate}
If either condition fails to hold then $\Delta_2((M,E),(I,M',E')) = s^2_r$.

\item $\Delta_2(s^2_r,(I,M,E))=s^2_r$
    \end{enumerate}

We take the intersection of $A_1$ and $A_2$ to get $A_d$, which satisfies the conditions of the theorem.
\end{proof}

\begin{corollary}
Let $A= (Q,\Sigma,\delta,q_0,F)$ be a 2AFW and let $w$ be a word over the alphabet $\Sigma = 2^{AP}$. There is an NFA $A_n$
such that $A_n$ accepts $w$ iff $A$ accepts $w$, $A_n$ has $2^{O(|Q|^2)}$ states, and each state of $A_n$ is of size

$O(|Q|^4)$.
\end{corollary}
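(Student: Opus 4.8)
The plan is to read $A_n$ off the DFA $A_d$ built in Theorem 4 by projecting its input alphabet $T = 2^{AP}\times 2^{Q\times\{-1,0,1\}\times Q}\times 2^{Q\times\{0,1\}\times Q}$ down to $2^{AP}$: where $A_d$ deterministically checks a supplied strategy word $\gamma$ and annotation $\eta$ alongside $w$, the automaton $A_n$ will instead guess $\gamma$ and $\eta$ one letter at a time while reading $w$. The state set of $A_d$ is carried over unchanged, so the size bounds will follow directly from the construction in Theorem 4.

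Concretely, let $h:T^*\to(2^{AP})^*$ be the length-preserving homomorphism that keeps only the first coordinate of each letter; by Theorem 4, $A$ accepts $w$ iff $w\in h(\mathcal{L}(A_d))$. I would take $A_n$ to have the same states, initial state, and accepting set as $A_d=(S,T,\Delta,s_0,G)$, alphabet $2^{AP}$, and transition $\delta_n(s,I)=\{s'\mid \Delta(s,(I,M,E))=s' \text{ for some } M\subseteq Q\times\{-1,0,1\}\times Q,\ E\subseteq Q\times\{0,1\}\times Q\}$. A straightforward induction on length then shows that $s_0,\dots,s_n$ is a run of $A_n$ on $I_0\cdots I_{n-1}$ iff it is a run of $A_d$ on some word $(I_0,M_0,E_0)\cdots(I_{n-1},M_{n-1},E_{n-1})$ over $T$; hence $\mathcal{L}(A_n)=h(\mathcal{L}(A_d))$, and so $A_n$ accepts $w$ iff $A$ accepts $w$.

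For the size bounds I would recall that in Theorem 4 the DFA $A_d$ is the product $A_1\cap A_2$, where $A_1$ has a constant number of states and the states of $A_2$ are the sink $s_r^2$ together with the pairs $(M,E)$ in $2^{Q\times\{-1,0,1\}\times Q}\times 2^{Q\times\{0,1\}\times Q}$. Since these sets have $2^{3|Q|^2}$ and $2^{2|Q|^2}$ members respectively, $|S|=O(1)\cdot(2^{5|Q|^2}+1)=2^{O(|Q|^2)}$, and the projection does not alter the state set, so $A_n$ has $2^{O(|Q|^2)}$ states. Each such state is a constant-size $A_1$-component together with either $s_r^2$ or a pair of relations $M\subseteq Q\times\{-1,0,1\}\times Q$ and $E\subseteq Q\times\{0,1\}\times Q$; since these relations hold $O(|Q|^2)$ triples over $Q$, writing a state down takes, bounding the encoding of each triple crudely, $O(|Q|^4)$ space, as claimed.

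I do not expect a genuine obstacle here --- the construction is routine. The one spot needing a little care is the correctness of the projection: it works precisely because $w$, $\gamma$, and $\eta$ are read synchronously as a single word over $T$ of common length $|w|$, so guessing $\gamma(i)$ and $\eta(i)$ while reading $w_i$ loses nothing, and because the boundary restrictions on $\gamma(0)$ and $\gamma(|w|-1)$ from Corollary 3.1 are already enforced inside $A_1$ and so carry over to $A_n$ automatically.
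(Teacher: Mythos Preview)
Your proposal is correct and follows essentially the same approach as the paper: guess $\gamma$ and $\eta$ letter by letter while simulating $A_d$ on $(w,\gamma,\eta)$, then read off the size bounds from the product $A_1\cap A_2$ built in Theorem~4. The paper's construction differs only cosmetically in that it stores the guessed pair $(M,E)$ in the state one step ahead (taking $S'=S\times 2^{Q\times\{-1,0,1\}\times Q}\times 2^{Q\times\{0,1\}\times Q}$) rather than guessing it at transition time as you do; your projection is the cleaner variant, and the asymptotic bounds are unchanged.
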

\begin{proof}

By Theorem 4 we have that there is a DFA $A_d = (S,\Sigma \times 2^{Q \times \{-1,0,1\} \times Q} \times 2^{Q \times \{0,1\} \times Q},\Delta,\allowbreak s_0,G)$ and that $A$ accepts $w$ iff there are a word $\gamma$ of length $|w|$ over $2^{Q \times \{-1,0,1\} \times Q}$ and a word $\eta$ of length $|w|$ over $2^{Q \times \{0,1\} \times Q}$ and $A_d$ accepts the word $(w_0, \gamma_0, \eta_0),(w_1,\gamma_1,\eta_1), \ldots,\allowbreak (w_{|w|-1},\gamma_{|w|-1},\eta_{|w|-1})$.

To construct an NFA $A_n$ that accepts $w$ iff $A$ accepts $w$, we have $A_n$ \emph{guess} $\gamma$ and $\eta$ and then simulate $A_d$ over $(w_0, \gamma_0, \eta_0),(w_1,\gamma_1,\eta_1),\ldots,(w_{|w|-1}, \gamma_{|w|-1},\eta_{|w|-1})$.
Formally, NFA $A_n = (S',\Sigma, \Delta',s'_0,G')$ is as follows:
\begin{itemize}
    \item $S' = S \times 2^{Q \times \{-1,0,1\} \times Q} \times 2^{Q \times \{0,1\} \times Q}$

    \item $s'_0 = s_0 \times 2^{Q \times \{-1,0,1\} \times Q} \times 2^{Q \times \{0,1\} \times Q}$

    \item $G' = G \times 2^{Q \times \{-1,0,1\} \times Q} \times 2^{Q \times \{0,1\} \times Q}$

    \item $\Delta'((s,m,e),i) = \Delta(s,(i,m,e)) \times 2^{Q \times \{-1,0,1\} \times Q} \times 2^{Q \times \{0,1\} \times Q}$
\end{itemize}

To determine the size of $S' = S \times 2^{Q \times \{-1,0,1\} \times Q} \times 2^{Q \times \{0,1\} \times Q}$, we recall that $S = S_1 \times S_2$, where $S_1$ is of constant size, $S_2 = 2^{Q \times \{-1,0,1\} \times Q} \times 2^{Q \times \{0,1\} \times Q} \cup s^2_r$. So the overall size of $S'$ is $2^{O(|Q|^2)}$.

To determine the size of a state in $S'$, we note that an element in $S'$ is a tuple consisting of (a) an element of $S_1$, which is a fixed-size set, (b) an element of $S_2$ is is a pair consisting of a subset of $Q \times \{-1,0,1\} \times Q$ and a subset of  $Q \times \{0,1\} \times Q$, (c) a subset of $Q \times \{-1,0,1\} \times Q$, and (d) a subset of $Q \times \{0,1\} \times Q$. So the size of each state is $O(|Q^2|)$.
\end{proof}

\begin{theorem}
Let $\varphi$ be an ALDL$_f$ formula. Satisfiability checking of $\varphi$ is PSPACE-complete.
\end{theorem}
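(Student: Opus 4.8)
The plan is to establish the two directions of PSPACE-completeness separately: \textbf{hardness} by a polynomial reduction from satisfiability of LTL$_f$, and \textbf{membership} by building the NFA produced by the constructions of Sections~6--7 and testing it for nonemptiness on the fly. For hardness, I would rely on the fact that LTL$_f$ satisfiability is PSPACE-hard \cite{ldlf} and exhibit a linear translation of an LTL$_f$ formula into an equivalent ALDL$_f$ formula: the Boolean connectives carry over verbatim; $O\chi$ becomes $\dbox{U_O}\chi$ for a fixed two-state path automaton $U_O$ having a single edge labelled \textbf{true} (so $\pi,i,j \models U_O$ exactly when $j = i+1$); and $\chi_1 \mathcal{U} \chi_2$ becomes $\dbox{U_{\mathcal{U}}}\chi_2$, where $U_{\mathcal{U}}$ has accepting start state $r$, a test edge $(r,\chi_1\?,s)$, and an edge $(s,\textbf{true},r)$, so that $\pi,i,j \models U_{\mathcal{U}}$ exactly when $j \ge i$ and $\pi,k \models \chi_1$ for all $i \le k < j$. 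A short induction on the LTL$_f$ semantics and Definition~\ref{def:aldlf_semantics} shows this translation is language-preserving; since each operator contributes a constant-size gadget it is linear, so ALDL$_f$ satisfiability is PSPACE-hard.

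For membership, given $\varphi$ I would first convert it to negation normal form, which costs only linear time (push negations inward by De Morgan's laws and by $\neg\dbox{U}\psi \equiv \bbox{U}\neg\psi$, $\neg\bbox{U}\psi \equiv \dbox{U}\neg\psi$). Applying Definition~\ref{def:aldlf_to_2afw} yields a 2AFW $A_\varphi = (Q,\Sigma,\delta,q_0,F)$ with $|Q|$ linear in $|\varphi|$ by Lemma~2, and with each $\delta(q,I)$ a positive Boolean formula of size polynomial in $|\varphi|$ that is computable in polynomial time (it is read off the transition relations of the path automata occurring in $\varphi$). By Theorem~\ref{thm:aldlf_2afw_equiv}, $\varphi$ is satisfiable iff $A_\varphi$ accepts some word from position~$0$; and by the NFA construction of Theorem~4 and its corollary there is an NFA $A_n$ with $L(A_n) = \{w \mid A_\varphi \text{ accepts } w\}$, having $2^{O(|Q|^2)}$ states, each described by a constant-size component together with a bounded number of subsets of $Q \times \{-1,0,1\} \times Q$ and of $Q \times \{0,1\} \times Q$ — hence of size polynomial in $|Q|$ — and with a transition relation whose membership queries reduce to a constant number of checks of the form ``$D$ satisfies $\delta(q,I)$'' and of closure/consistency checks on the $\gamma$- and $\eta$-symbols, all decidable in time polynomial in $|Q|$.

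It then remains to decide whether $L(A_n) \neq \emptyset$. I would do this by a nondeterministic reachability search that stores at most one state of $A_n$ at a time: starting from the initial state, repeatedly guess an input letter $i \in 2^{AP(\varphi)}$ together with a successor state (and the $\gamma$- and $\eta$-symbols it carries), verify the transition, and halt acceptingly upon reaching a final state; a branch is abandoned once a step counter exceeds the number of states of $A_n$, which is at most $2^{O(|Q|^2)}$ and so needs only $O(|Q|^2)$ bits. Every object retained has size polynomial in $|Q|$, hence in $|\varphi|$, and every verification is polynomial-time, so the procedure runs in nondeterministic polynomial space; by Savitch's theorem it is in PSPACE. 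Combined with the hardness direction, this gives PSPACE-completeness.

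The main obstacle I expect is the bookkeeping in the membership direction: one must verify that, although $A_n$ is of exponential size, its individual states and the letters of the product alphabet $2^{AP} \times 2^{Q\times\{-1,0,1\}\times Q} \times 2^{Q\times\{0,1\}\times Q}$ admit polynomial-size encodings, and that its transition relation is polynomial-time decidable directly from $\varphi$ — this is precisely where the linear bound on $|Q|$ (Lemma~2) and the polynomial size of the path automata occurring in $\varphi$ are used — so that the on-the-fly nonemptiness search genuinely fits in polynomial space rather than merely in exponential space.
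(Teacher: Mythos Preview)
Your proposal is correct and follows essentially the same approach as the paper: PSPACE-hardness via a linear reduction from LTL$_f$ satisfiability, and PSPACE membership via the chain $\varphi \mapsto A_\varphi \mapsto A_n$ (Definition~\ref{def:aldlf_to_2afw}, Theorem~\ref{thm:aldlf_2afw_equiv}, Corollary~4.1) followed by an on-the-fly nondeterministic nonemptiness test on $A_n$ and Savitch's theorem. Your write-up is in fact more careful than the paper's in several places---you make explicit the NNF preprocessing, the concrete LTL$_f$-to-ALDL$_f$ gadgets, the polynomial-time decidability of $A_n$'s transition relation, and the step counter bounding the search---none of which changes the route but all of which tighten the argument.
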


\begin{proof}
To establish the lower bound, we note that the satisfiability problem for LTL$_f$ is PSPACE-complete and that a formula of LTL$_f$ can be converted to ALDL$_f$ in linear time and space using the following procedure: translate from LTL$_f$ to LDL$_f$ using the linear construction provided in \cite{ldlf}, then translate from LDL$_f$ to ALDL$_f$ by converting each instance of path expression to a path automaton using Thompson's construction \cite{thompson}.

Now for the upper bound. Definition 10 provides the construction of a 2AFW $A_{\varphi} = (Q, \Sigma, \delta, q_0, F)$ from an ALDL$_f$ formula $\varphi$. By Lemma 2 we have that $|Q| \leq |\varphi|$ because $Q$ is the Fischer-Ladner closure of $\varphi$. By Theorem 1 we have that $\mathcal{L}(\varphi) = \mathcal{L}(A_\varphi)$.

By Corollary 4.1, we have that there is an NFA $A_n$ such that $\mathcal{L}(A_{\varphi}) = \mathcal{L}(A_n)$ and that $A_n$ has $2^{O(|Q|^2)}$ states and that each state is of size $O(|Q|^4)$. Reachability from the initial state of $A_n$ to an accepting state of $A_n$ can be performed through a nondeterministic search: at each step, guess a letter of the alphabet and guess the successor state, while remembering only the successor state, and if an accepting state is reached as the final input is consumed then it is reachable from the initial state. Now, each letter is polynomial in the size of $\varphi$ and by Corollary 4.1 we have that each state of $A_n$ is polynomial in the size of $\varphi$. So this procedure can be done in NPSPACE. Since $\varphi$ and $A_n$ define the same language, $\varphi$ is satisfiable iff an accepting state of $A_n$ is reachable from the initial state of $A_n$. So satisfiability of $\varphi$ is in NPSPACE. But by Savitch's Theorem \cite{savitch}, we have that NPSPACE = PSPACE. So satisfiability checking of $\varphi$ is in PSPACE.
\end{proof}

\section{Conclusion}

LDL$_f$'s use of regular-expression operators allows it to achieve greater expressiveness than LTL$_f$, yet satisfiability checking of LDL$_f$ formulas is PSPACE-complete, just like LTL$_f$. That is,  LDL$_f$ , compared to LTL$_f$, is more expressive yet no more expensive. Thus, LDL$_f$ may be an attractive alternative for many applications.

ALDL$_f$ extends the paradigm of LDL$_f$ by allowing the use of NFAs in place of regular expressions to express temporal constraints and also by providing for the direct expression of past modalities, which can be combined with present and future modalities within the same formula. Because NFAs are stateful, they can sometimes be more convenient to use than regular expressions. They are also exponentially more succinct than regular expressions. Satisfiability checking of ALDL$_f$ formulas is still PSPACE-complete, so these features come at no additional cost compared to LDL$_f$.

\bibliography{cav}

\newcommand{\etalchar}[1]{$^{#1}$}
\begin{thebibliography}{DGDSF{\etalchar{+}}20}

\bibitem[BFM06]{bienvenu}
Meghyn Bienvenu, Christian Fritz, and Sheila~A McIlraith.
\newblock Planning with qualitative temporal preferences.
\newblock {\em KR}, 6:134--144, 2006.

\bibitem[CKSA81]{chandra}
AK~Chandra, DC~Kozen, LJ~Stockmeyer, and J~Alternation.
\newblock Acm, vol. 28 (1), 1981.

\bibitem[DGDSF{\etalchar{+}}20]{pldlf}
Giuseppe De~Giacomo, Antonio Di~Stasio, Francesco Fuggitti, Sasha Rubin, et~al.
\newblock Pure-past linear temporal and dynamic logic on finite traces.
\newblock In {\em IJCAI}, pages 4959--4965, 2020.

\bibitem[DGV{\etalchar{+}}13]{ldlf}
Giuseppe De~Giacomo, Moshe~Y Vardi, et~al.
\newblock Linear temporal logic and linear dynamic logic on finite traces.
\newblock In {\em Ijcai}, volume~13, pages 854--860, 2013.

\bibitem[EJ91]{emersonandjutla}
E~Allen Emerson and Charanjit~S Jutla.
\newblock Tree automata, mu-calculus and determinacy.
\newblock In {\em FoCS}, volume~91, pages 368--377. Citeseer, 1991.

\bibitem[FL79]{fischer79}
MJ~Fischer and RE~Ladner.
\newblock Propositional dynamic logic of regular programs.. i. comput.
\newblock {\em System Sci}, 18(2), 1979.

\bibitem[Gab04]{gabaldon}
Alfredo Gabaldon.
\newblock Precondition control and the progression algorithm.
\newblock In {\em ICAPS}, pages 23--32, 2004.

\bibitem[GH09]{gruber1}
Hermann Gruber and Markus Holzer.
\newblock Tight bounds on the descriptional complexity of regular expressions.
\newblock In {\em International Conference on Developments in Language Theory}, pages 276--287. Springer, 2009.

\bibitem[GH15]{gruber2}
Hermann Gruber and Markus Holzer.
\newblock From finite automata to regular expressions and back—a summary on descriptional complexity.
\newblock {\em International Journal of Foundations of Computer Science}, 26(08):1009--1040, 2015.

\bibitem[GO14]{geffert}
Viliam Geffert and Alexander Okhotin.
\newblock Transforming two-way alternating finite automata to one-way nondeterministic automata.
\newblock In {\em Mathematical Foundations of Computer Science 2014: 39th International Symposium, MFCS 2014, Budapest, Hungary, August 25-29, 2014. Proceedings, Part I 39}, pages 291--302. Springer, 2014.

\bibitem[KZ21]{kapoutsis}
Christos Kapoutsis and Mohammad Zakzok.
\newblock Alternation in two-way finite automata.
\newblock {\em Theoretical Computer Science}, 870:75--102, 2021.

\bibitem[PBvdA10]{pevsic}
Maja Pe{\v{s}}i{\'c}, Dragan Bo{\v{s}}na{\v{c}}ki, and Wil~MP van~der Aalst.
\newblock Enacting declarative languages using ltl: avoiding errors and improving performance.
\newblock In {\em Model Checking Software: 17th International SPIN Workshop, Enschede, The Netherlands, September 27-29, 2010. Proceedings 17}, pages 146--161. Springer, 2010.

\bibitem[Pnu77]{pnueli77}
Amir Pnueli.
\newblock The temporal logic of programs.
\newblock In {\em 18th annual symposium on foundations of computer science (sfcs 1977)}, pages 46--57. ieee, 1977.

\bibitem[Pri57]{prior57}
Arthur~N Prior.
\newblock {\em Time and modality}.
\newblock Oxford University Press, 1957.

\bibitem[Sav70]{savitch}
Walter~J Savitch.
\newblock Relationships between nondeterministic and deterministic tape complexities.
\newblock {\em Journal of computer and system sciences}, 4(2):177--192, 1970.

\bibitem[Tho68]{thompson}
Ken Thompson.
\newblock Programming techniques: Regular expression search algorithm.
\newblock {\em Communications of the ACM}, 11(6):419--422, 1968.

\bibitem[vDAPS09]{aalst}
Wil~MP van Der~Aalst, Maja Pesic, and Helen Schonenberg.
\newblock Declarative workflows: Balancing between flexibility and support.
\newblock {\em Computer Science-Research and Development}, 23:99--113, 2009.

\bibitem[VR05]{sva}
Srikanth Vijayaraghavan and Meyyappan Ramanathan.
\newblock {\em A practical guide for SystemVerilog assertions}.
\newblock Springer Science \& Business Media, 2005.

\bibitem[VW84]{vardi-wolper}
Moshe~Y Vardi and Pierre Wolper.
\newblock Automata theoretic techniques for modal logics of programs.
\newblock In {\em Proceedings of the sixteenth annual acm symposium on theory of computing}, pages 446--456, 1984.

\bibitem[Wil99]{wilke}
Thomas Wilke.
\newblock Classifying discrete temporal properties.
\newblock In {\em STACS 99: 16th Annual Symposium on Theoretical Aspects of Computer Science Trier, Germany, March 4--6, 1999 Proceedings 16}, pages 32--46. Springer, 1999.

\end{thebibliography}
\bibliographystyle{alphaurl}

\end{document}